\title[]{Game Redesign in No-regret Game Playing}
\author{Yuzhe Ma}
\affiliation{
  \institution{University of Wisconsin--Madison}
    \city{}
  \country{}}
\email{yzm234@wisc.edu}
\author{Young Wu}
\affiliation{
  \institution{University of Wisconsin--Madison}
    \city{}
  \country{}}
\email{yw@cs.wisc.edu}
\author{Xiaojin Zhu}
\affiliation{
  \institution{University of Wisconsin--Madison}
    \city{}
  \country{}}
\email{jerryzhu@cs.wisc.edu}
\begin{abstract}
We study the game redesign problem in which an external designer has the ability to change the payoff function in each round, but incurs a design cost for deviating from the original game. 
The players apply no-regret learning algorithms to repeatedly play the changed games with limited feedback. 
The goals of the designer are to 
(i) incentivize all players to take a specific target action profile frequently; 
and (ii) incur small cumulative design cost. 
We present game redesign algorithms with the guarantee that the target action profile is played in $T-o(T)$ rounds while incurring only $o(T)$ cumulative design cost. 
Game redesign describes both positive and negative applications: a benevolent designer who incentivizes players to take a target action profile with better social welfare compared to the solution of the original game, 
or a malicious attacker whose target action profile benefits themselves but not the players.
Simulations on four classic games confirm the effectiveness of our proposed redesign algorithms.

\end{abstract}
\keywords{Game Redesign, No-regret Learning, Target Action Profile, Sublinear Cumulative Design Cost}
\newcommand{\BibTeX}{\rm B\kern-.05em{\sc i\kern-.025em b}\kern-.08em\TeX}
\newcommand\ind[1]{\ensuremath{\mathds{1}\left[#1\right]}}
\newcommand{\setfont}[1]{\mathbb{#1}}
\def\argmin{\ensuremath{\mbox{argmin}}}
\renewcommand{\L}{\mathcal{L}}
\newcommand{\A}{\mathcal{A}}
\newcommand{\E}[2]{\mathbf{E}_{#1}\left[#2\right]}
\newcommand{\R}{\setfont{R}}
\newcommand{\thmref}[1]{Theorem~\ref{#1}}
\newcommand{\lemref}[1]{Lemma~\ref{#1}}
\newtheorem{theorem}{Theorem}
\newtheorem{corollary}[theorem]{Corollary}
\newtheorem{assumption}[theorem]{Assumption}
\newtheorem{definition}{Definition}
\newtheorem*{remark}{Remark}
\begin{document}


\pagestyle{fancy}
\fancyhead{}


\maketitle 


\section{Introduction}

Consider a normal-form game with loss function $\ell^o$.
This is the ``original game.''
As an example, the volunteer's dilemma (see Table~\ref{tab:VD}) has each player choose whether or not to volunteer for a cause that benefits all players. 
It is known that all pure Nash equilibria in this game involve a subset of the players free-riding the contribution from the remaining players. 
$M$ players, who initially do not know $\ell^o$, use no-regret algorithms to individually choose their action in each of the $t=1 \ldots T$ rounds. 
The players receive limited feedback:
suppose the chosen action profile in round $t$ is $a^t=(a^t_1, \ldots, a^t_M)$, 
then the $i$-th player only receives her own loss $\ell^o_i(a^t)$ and does not observe the other players' actions or losses.

Game redesign is the following task.
A game designer -- not a player -- does not like the solution to $\ell^o$. 
Instead, the designer wants to incentivize a particular target action profile $a^\dagger$, for example ``every player volunteers''.
The designer has the power to redesign the game: before each round $t$ is played, the designer can change $\ell^o$ to some $\ell^t$.  The players will receive the new losses $\ell^t_i(a^t)$, but the designer pays a design cost $C(\ell^o, \ell^t, a^t)$ for that round for deviating from $\ell^o$.
The designer's goal is to make the players play the target action profile $a^\dagger$ in the vast majority ($T-o(T)$) of rounds, while the designer only pays $o(T)$ cumulative design cost.
Game redesign naturally emerges under two opposing contexts:
\begin{itemize}
\item A benevolent designer wants to redesign the game to improve social welfare, as in the volunteer's dilemma;
\item A malicious designer wants to poison the payoffs to force a nefarious target action profile upon the players.  This is an extension of reward-poisoning adversarial attacks (previously studied on bandits~\citep{jun2018adversarial,liu2019data,ma2018data,ming2020attack,guan2020robust, garcelon2020adversarial,bogunovic2021stochastic,zuo2020near,lu2021stochastic} and reinforcement learning~\citep{zhang2020adaptive,ma2019policy,rakhsha2020policy,sun2020vulnerability,huang2019deceptive}) to game playing.
\end{itemize}
For both contexts the mathematical question is the same.  Since the design costs are measured by deviations from the original game $\ell^o$, the designer is not totally free in creating new games. 
Intuitively, the following considerations are sufficient for successful game redesign:
\begin{enumerate}[leftmargin=*]
\item Do not change the loss of the target action profile, i.e. let $\ell^t(a^\dagger)=\ell^o(a^\dagger), \forall t$.
If game redesign is indeed successful, then $a^\dagger$ will be played for $T-o(T)$ rounds.
As we will see, $\ell^t(a^\dagger)=\ell^o(a^\dagger)$ means there is no design cost in those rounds under our definition of $C$.
The remaining rounds incur at most $o(T)$ cumulative design cost. 

\item The target action profile $a^\dagger$ forms a strictly dominant strategy equilibrium. This ensures no-regret players will eventually learn to prefer $a^\dagger$ over any other action profiles.


\end{enumerate}
We formalize these intuitions in the rest of the paper.

\section{The Game Redesign Problem}
We first describe the original game without the designer.
There are $M$ players.
Let $\A_i$ be the finite action space of player $i$, and let $A_i=|\A_i|$.
The original game is defined by the loss function $\ell^o: \A_1 \times \ldots \A_M \mapsto \R^M$.
The players do not know $\ell^o$.  Instead, we assume they play the game for $T$ rounds using no-regret algorithms. 
This may be the case, for example, if the players are learning an approximate Nash equilibrium in zero-sum $\ell^o$ or coarse correlated equilibrium in general sum $\ell^o$.
In running the no-regret algorithm, the players maintain their own action selection policies $\pi_i^t\in \Delta^{\A_i}$ over time, where $\Delta^{\A_i}$ is the probability simplex over $\A_i$. In each round $t$, every player $i$ samples an action $a_i^t$ according to policy $\pi_i^t$.
This forms an action profile $a^t=(a^t_1, \ldots, a^t_M)$.
The original game produces the loss vector $\ell^o(a^t)=(\ell_1^o(a^t),..., \ell_M^o(a^t))$.  
However, player $i$ only observes her own loss value $\ell_i^o(a^t)$, not the other players' losses or their actions.  
All players then update their policy according to their no-regret algorithms.

We now bring in the designer.
The designer knows $\ell^o$ and wants players to frequently play an arbitrary but fixed target action profile $a^\dagger$.
At the beginning of round $t$, the designer commits to a potentially different loss function $\ell^t$.
Note this involves preparing the loss vector $\ell^t(a)$ for all action profiles $a$ (i.e. ``cells'' in the payoff matrix).
The players then choose their action profile $a^t$.
Importantly, the players receive losses $\ell^t(a^t)$, not $\ell^o(a^t)$.
For example, in games involving money such as the volunteer game, the designer may achieve $\ell^t(a^t)$ via taxes or subsidies, and in zero-sum games such as the rock-paper-scissors game, the designer essentially ``makes up'' a new outcome and tell each player whether they win, tie, or lose via $\ell^t_i(a^t)$;
The designer incurs a cost $C(\ell^o, \ell^t, a^t)$ for deviating from $\ell^o$.
The interaction among the designer and the players is summarized as below.
\begin{algorithm}
{Designer knows $\ell^o$, $a^\dagger$, $M$, $\A_i,...,\A_M$, 
and player no-regret rate $\alpha$}
\caption*{\textbf{Protocol}: Game Redesign}
\label{prot:protocol}
\begin{algorithmic}
    \FOR{$t=1,2,\ldots, T$}
    \STATE Designer prepares new loss function $\ell^t$.\label{protocol:design}
    \STATE Players form action profile $a^t=(a_1^t,...,a_M^t)$, where $a_i^t\sim\pi_i^t, \forall i\in [M]$.
    \STATE Player $i$ observes the new loss $\ell_i^t(a^t)$ and updates policy $\pi_i^t$.
    \STATE Designer incurs cost $C(\ell^o, \ell^t, a^t)$.\label{perturbation_implement}
    \ENDFOR
  \end{algorithmic}
\end{algorithm}

The designer has two goals simultaneously: 
\begin{enumerate}[leftmargin=*]
\item 
To incentivize the players to frequently choose the target action profile $a^\dagger$ (which may not coincide with any solution of $\ell^o$).  Let $N^T(a)=\sum_{t=1}^T \ind{a^t=a}$ be the number of times an action profile $a$ is chosen in $T$ rounds, then this goal is to achieve $\E{}{N^T(a^\dagger)}=T-o(T)$. 

\item To have a small cumulative design cost $C^T := \sum_{t=1}^T C(\ell^o, \ell^t, a^t)$, specifically $\E{}{C^T} = o(T)$.
\end{enumerate}

The per-round design cost $C(\ell^o, \ell^t, a)$ is application dependent.
One plausible cost is to account for the ``proposed changes'' in all action profiles, not just what is actually chosen: an example is $C(\ell^o, \ell^t, a^t)=\sum_a \|\ell^o(a)-\ell^t(a)\|$. Note that it ignores the $a^t$ argument.
In many applications, though, only the chosen action profile costs the designer: 
an example is $C(\ell^o, \ell^t, a^t)= \|\ell^o(a^t)-\ell^t(a^t)\|$.
This paper uses a slight generalization of the latter cost:
\begin{assumption}
The non-negative designer cost function $C$ satisfies
$\forall t, \forall a^t, C(\ell^o, \ell^t, a^t)\le \eta\|\ell^o(a^t)-\ell^t(a^t)\|_p$
for some Lipschitz constant $\eta$ and norm $p\ge 1$.
\end{assumption}
This implies no design cost if the losses are not modified, i.e., when $\ell^o(a^t)=\ell^t(a^t)$, $C(\ell^o, \ell^t, a^t)=0$ .

\section{Assumptions on the Players: No-Regret Learning}
\label{sec:no-regret-learning}
The designer assumes that the players 
are each running a no-regret learning algorithm like EXP3.P~\citep{bubeck2012regret}. It is well-known that for two-player ($M=2$) zero-sum games, no-regret learners could approximate an Nash Equilibrium~\citep{blumlearning}. More general results suggest that for multi-player ($M\ge 2$) general-sum games, no-regret learners can approximate a Coarse Correlated Equilibrium~\citep{hart2000simple}. We first define the player's regret. We use $a_{-i}^t$ to denote the actions selected by all players except player $i$ in round $t$.
\begin{definition}{(Regret)}.\label{def:regret} For any player $i$, the best-in-hindsight regret with respect to a sequence of loss functions $\ell_i^t(\cdot, a_{-i}^t), t\in [T]$, is defined as
\begin{equation}\label{eq:regret}
R_i^T=\sum_{t=1}^T \ell_i^t(a_i^t, a_{-i}^t) - \min_{a_i\in \A_i} \sum_{t=1}^T \ell_i^t(a_i, a_{-i}^t).
\end{equation}
The expected regret is defined as $\E{}{R_i^T}$, where the expectation is taken with respect to the randomness in the selection of actions $a^t, t\in [T]$ over all players.
\end{definition}
\begin{remark}
The loss functions $\ell_i^t(\cdot, a_{-i}^t), t\in [T]$ depend on the actions selected by the other players $a_{-i}^t$, while $a_{-i}^t$ further depends on $a^1,...,a^{t-1}$ of all players in the first $t-1$ rounds. Therefore, $\ell_i^t(\cdot, a_{-i}^t)$ depends on $a^1_i,...,a^{t-1}_i$. That means, from player $i$'s perspective, the player is faced with a non-oblivious (adaptive) adversary~\citep{slivkins2019introduction}.
\end{remark}

\begin{remark}
Note that  $a_i^* :=\argmin_{a_i\in \A_i} \sum_{t=1}^T \ell_i^t(a_i, a_{-i}^t)$ in~\eqref{eq:regret} would have meant a baseline in which player $i$ always plays the best-in-hindsight action $a_i^*$ in all rounds $t \in [T]$. Such baseline action should have caused all other players to change their plays away from $a^1_{-i}, ..., a^T_{-i}$.  However, we are disregarding this fact in defining~\eqref{eq:regret} .  For this reason,~\eqref{eq:regret}  is not fully counterfactual, and is called the best-in-hindsight regret in the literature~\citep{bubeck2012regret}.   The same is true when we define expected regret and introduce randomness in players' actions $a^t$.
\end{remark}
Our key assumption is that the learners achieve sublinear expected regret. This assumption is satisfied by standard bandit algorithms such as EXP3.P~\citep{bubeck2012regret}.
\begin{assumption}{(No-regret Learner)}
We assume the players apply no-regret learning algorithm that achieves expected regret $\E{}{R_i^T}=O(T^\alpha), \forall i$ for some $\alpha\in[0, 1)$.
\end{assumption}


\section{Game Redesign Algorithms}
There is an important consideration regarding the allowed values of $\ell^t$.  The original game $\ell^o$ has a set of ``natural loss values'' $\L$.  For example, in the rock-paper-scissors game $\L=\{-1,0,1\}$ for the player wins (recall the value is the loss), ties, and loses, respectively; while for games involving money it is often reasonable to assume $\L$ as some interval $[L, U]$.  Ideally, $\ell^t$ should take values in $\L$ to match the semantics of the game or to avoid suspicion (in the attack context).  Our designer can work with discrete $\L$ (section~\ref{sec:attack_discrete_value}); but for exposition we will first allow $\ell^t$ to take real values in 
$\tilde \L =[L, U]$, where $L=\min_{x\in \L} x$ and $U=\max_{x\in L} x$. We assume $U$ and $L$ are the same for all players and $U>L$, which is satisfied when $\L$ contains at least two distinct values. 

\subsection{Algorithm: Interior Design}
The name refers to the narrow applicability of Algorithm~\ref{alg:interior_design}: 
the original game values for the target action profile
$\ell^o(a^\dagger)$ must all be in the interior of $\tilde \L$.
Formally, we require
$\exists \rho\in (0, \frac{1}{2}(U-L)]$, $\forall i, \ell_i^o(a^\dagger)\in[L+\rho, U-\rho]$. In Algorithm~\ref{alg:interior_design}, we present the interior design.
The key insight of Algorithm~\ref{alg:interior_design} is to keep $\ell^o(a^\dagger)$ unchanged:
If the designer is successful, $a^\dagger$ will be played for $T-o(T)$ rounds.
In these rounds, the designer cost will be zero.  The other $o(T)$ rounds each incur bounded cost.
Overall, this will ensure cumulative design cost $C^T = o(T)$.
For the attack to be successful, the designer can make $a^\dagger$ the strictly dominant strategy in any new games $\ell$.   The designer can do this by judiciously increasing or decreasing the loss of other action profiles in $\ell^o$: there is enough room because $\ell^o(a^\dagger)$ is in the interior. In fact, the designer can design a time-invariant game $\ell^t=\ell$ as Algorithm~\ref{alg:interior_design} shows.

\begin{algorithm}{}
\caption{Interior Design}
\label{alg:interior_design}
\begin{algorithmic}
\REQUIRE the target action profile $a^\dagger$; the original game $\ell^o$.
\ENSURE a time-invariant game $\ell$ constructed as follows:
\STATE \begin{equation}\label{eq:interior_design}
\forall i, a,  \ell_i(a)=\left\{
\begin{array}{ll}
\ell_i^o(a^\dagger)- (1-\frac{d(a)}{M})\rho & \mbox{ if } a_i= a_i^\dagger, \\
\ell_i^o(a^\dagger)+\frac{d(a)}{M}\rho& \mbox{ if } a_i\neq a_i^\dagger,
\end{array}
\right.
\end{equation}
where $d(a)=\sum_{j=1}^M \ind{a_j=a_j^\dagger}$.
  \end{algorithmic}
\end{algorithm}

\begin{restatable}{lemma}{lemPostAttackCost}
\label{lem:post-attack-cost}
The redesigned game~\eqref{eq:interior_design} has the following properties.
\begin{enumerate}[leftmargin=*]
\item $ \forall i, a, \ell_i(a)\in \tilde \L$, thus $\ell$ is valid.\label{lem:post-attack-cost-prop1}
\item For every player $i$, the target action $a_i^\dagger$ strictly dominates any other action by $(1-\frac{1}{M})\rho$, i.e., $\ell_i(a_i, a_{-i})= \ell_i(a_i^\dagger, a_{-i})+(1-\frac{1}{M})\rho, \forall i, a_i\neq a_i^\dagger, a_{-i}$.\label{lem:post-attack-cost-prop2}
\item $\ell(a^\dagger)=\ell^o(a^\dagger)$.\label{lem:post-attack-cost-prop3}
\item If the original loss for the target action profile $\ell^o(a^\dagger)$ is zero-sum, then the redesigned game $\ell$ is also zero-sum.\label{lem:post-attack-cost-prop4}
\end{enumerate}
\end{restatable}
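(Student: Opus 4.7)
The plan is a direct case analysis for each of the four properties, driven by the counter $d(a) = \sum_{j=1}^M \ind{a_j = a_j^\dagger} \in \{0, 1, \ldots, M\}$ and by whether or not a given coordinate $a_i$ matches $a_i^\dagger$. Everything reduces to plugging into~\eqref{eq:interior_design} and using the range hypothesis $\ell_i^o(a^\dagger) \in [L+\rho, U-\rho]$.

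For Property~\ref{lem:post-attack-cost-prop1}, I would treat the two branches separately. In the first branch ($a_i = a_i^\dagger$), we have $d(a) \ge 1$, so $1 - d(a)/M \in [0, 1-1/M]$, meaning the subtracted quantity lies in $[0, \rho]$; together with $\ell_i^o(a^\dagger) \ge L+\rho$ this gives $\ell_i(a) \in [L, U-\rho]$. In the second branch ($a_i \neq a_i^\dagger$), $d(a) \le M-1$, so the added quantity again lies in $[0, \rho]$, and $\ell_i^o(a^\dagger) \le U-\rho$ yields $\ell_i(a) \in [L+\rho, U]$. Either way $\ell_i(a) \in \tilde\L$.

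For Property~\ref{lem:post-attack-cost-prop2}, fix $i$, $a_{-i}$, and $a_i \neq a_i^\dagger$, and write $k = \sum_{j \ne i} \ind{a_j = a_j^\dagger}$. Then at $(a_i^\dagger, a_{-i})$ the counter is $d = k+1$ and the first branch applies, giving $\ell_i^o(a^\dagger) - (1 - (k+1)/M)\rho$; at $(a_i, a_{-i})$ the counter is $d = k$ and the second branch applies, giving $\ell_i^o(a^\dagger) + k\rho/M$. Taking the difference yields exactly $(1 - 1/M)\rho$ irrespective of $k$, which is the strict dominance claim. Property~\ref{lem:post-attack-cost-prop3} is immediate by setting $a = a^\dagger$, forcing $d(a^\dagger) = M$ so the first-branch correction vanishes coordinatewise.

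For Property~\ref{lem:post-attack-cost-prop4}, I would sum $\ell_i(a)$ over $i$, splitting the sum into the $d(a)$ indices with $a_i = a_i^\dagger$ and the $M - d(a)$ indices with $a_i \neq a_i^\dagger$. The $\ell_i^o(a^\dagger)$ pieces contribute $\sum_i \ell_i^o(a^\dagger) = 0$ by the zero-sum hypothesis, and the $\rho$-corrections telescope via $-d(a)\bigl(1 - d(a)/M\bigr) + \bigl(M - d(a)\bigr)d(a)/M = 0$. I do not expect a genuine obstacle: the only subtlety is keeping straight that $d(a) \ge 1$ in the first branch and $d(a) \le M-1$ in the second, which is exactly what forces the per-coordinate deviation to stay inside $[-\rho, \rho]$ and keeps $\ell$ valid.
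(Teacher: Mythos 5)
Your proposal is correct and follows essentially the same route as the paper's proof: a direct case analysis on the two branches of~\eqref{eq:interior_design} using $0\le d(a)\le M$ and the interior assumption for validity, the identity $d(a_i^\dagger,a_{-i})=d(a_i,a_{-i})+1$ for the exact dominance gap, $d(a^\dagger)=M$ for Property~\ref{lem:post-attack-cost-prop3}, and the cancellation $-d(a)\bigl(1-\frac{d(a)}{M}\bigr)\rho+(M-d(a))\frac{d(a)}{M}\rho=0$ for the zero-sum claim. The only cosmetic difference is that you invoke $d(a)\ge 1$ and $d(a)\le M-1$ in the respective branches, which is not actually needed for validity (the paper's coarser bounds $0\le d(a)\le M$ suffice), but this does not affect correctness.
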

The proof is in appendix.
Our main result is that Algorithm~\ref{alg:interior_design} can achieve $\E{}{N^T(a^\dagger)}=T-O(T^\alpha)$ with 
a small cumulative design cost  $\E{}{C^T} = O(T^\alpha)$.
It is worth noting that even though many entries in the redesigned game $\ell$ can appear to be quite different than the original game $\ell^o$, their contribution to the design cost is small because the design discourages them from being played often.
\begin{restatable}{theorem}{thmAttackVersionOne}
\label{thm:attack_version_01}
A designer that uses Algorithm~\ref{alg:interior_design} can achieve expected number of target plays $\E{}{N^T(a^\dagger)}=T-O(MT^\alpha)$ while incurring expected cumulative design cost $\E{}{C^T}=O(\eta M^{1+\frac{1}{p}}T^\alpha)$.
\end{restatable}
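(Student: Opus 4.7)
The plan is to leverage the strict dominance property established in Lemma~\ref{lem:post-attack-cost}(\ref{lem:post-attack-cost-prop2}): under the redesigned time-invariant game $\ell$, every deviation by player $i$ from $a_i^\dagger$ costs her at least $(1-\tfrac{1}{M})\rho$ per round, regardless of the other players' actions. Coupling this per-round gap with the no-regret guarantee will control $\E{}{N^T(a^\dagger)}$, and then the cost bound will follow almost immediately because $\ell(a^\dagger)=\ell^o(a^\dagger)$ (Lemma~\ref{lem:post-attack-cost}(\ref{lem:post-attack-cost-prop3})) so only the $o(T)$ ``off-target'' rounds can contribute to $C^T$.

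First I would fix a player $i$ and compare her realized cumulative loss to the hindsight baseline of always playing $a_i^\dagger$. Using the dominance gap, I get
\begin{equation*}
\sum_{t=1}^T \ell_i(a_i^t,a_{-i}^t) - \sum_{t=1}^T \ell_i(a_i^\dagger,a_{-i}^t)
\;\geq\; \bigl(1-\tfrac{1}{M}\bigr)\rho \cdot N_i^T(\neg a_i^\dagger),
\end{equation*}
where $N_i^T(\neg a_i^\dagger)=\sum_t \ind{a_i^t\neq a_i^\dagger}$. Since the left-hand side is at most $R_i^T$ by Definition~\ref{def:regret} (taking $a_i=a_i^\dagger$ as a particular fixed action), taking expectations and using the no-regret assumption yields $\E{}{N_i^T(\neg a_i^\dagger)} \leq \E{}{R_i^T}/((1-\tfrac{1}{M})\rho) = O(T^\alpha/\rho)$, where $\rho$ is a game-dependent constant absorbed into the $O(\cdot)$.

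Next I would convert these per-player bounds into a bound on $N^T(a^\dagger)$ via a simple union-style argument: the event $\{a^t\neq a^\dagger\}$ equals $\{\exists i: a_i^t\neq a_i^\dagger\}$, so
\begin{equation*}
T - N^T(a^\dagger) \;=\; \sum_{t=1}^T \ind{a^t\neq a^\dagger} \;\leq\; \sum_{i=1}^M N_i^T(\neg a_i^\dagger).
\end{equation*}
Taking expectations gives $T-\E{}{N^T(a^\dagger)} \leq M\cdot O(T^\alpha) = O(MT^\alpha)$, proving the first claim.

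For the cost, Assumption~1 and $\ell(a^\dagger)=\ell^o(a^\dagger)$ give $C(\ell^o,\ell,a^\dagger)=0$, while for any $a^t\neq a^\dagger$ the per-round cost is at most $\eta\|\ell^o(a^t)-\ell(a^t)\|_p \leq \eta M^{1/p}(U-L)$ using $\ell^o(a^t),\ell(a^t)\in\tilde\L^M$. Combining with the target-play bound,
\begin{equation*}
\E{}{C^T} \;\leq\; \eta M^{1/p}(U-L)\cdot \bigl(T-\E{}{N^T(a^\dagger)}\bigr) \;=\; O\!\bigl(\eta M^{1+1/p} T^\alpha\bigr).
\end{equation*}
The only subtle step is the first one—recognizing that a fixed baseline $a_i^\dagger$ (not the true best-in-hindsight action) is all one needs to plug into the regret definition, so that the strict-dominance margin translates linearly into an upper bound on off-target plays; everything else is bookkeeping.
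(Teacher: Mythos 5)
Your proposal is correct and follows essentially the same route as the paper: use the exact dominance gap from Lemma~\ref{lem:post-attack-cost} to lower-bound each player's regret by $(1-\tfrac{1}{M})\rho$ times her off-target plays, take a union bound over players, and then bound the cost using $\ell(a^\dagger)=\ell^o(a^\dagger)$ together with the per-round bound $\eta M^{1/p}(U-L)$ on off-target rounds. The only cosmetic difference is that you compare against the fixed baseline $a_i^\dagger$ via an inequality, whereas the paper uses the fact that $a_i^\dagger$ is exactly the best-in-hindsight action to write the same relation as an equality; both are valid.
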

\begin{proof}
Since the designer perturbs $\ell^o(\cdot)$ to $\ell(\cdot)$, the players are equivalently running no-regret algorithms under loss function $\ell$. Note that according to~\lemref{lem:post-attack-cost} property~\ref{lem:post-attack-cost-prop2}, $a_i^\dagger$ is the optimal action for player $i$, and taking a non-target action results in $(1-\frac{1}{M})\rho$ regret regardless of $a_{-i}$, thus the expected regret of player $i$ is
\begin{equation}
\begin{aligned}
&\E{}{R_i^T} = \E{}{\sum_{t=1}^T \ind{a_i^t\neq a_i^\dagger}(1-\frac{1}{M})\rho}\\
&= (1-\frac{1}{M})\rho \left(T-\E{}{ N_i^T(a_i^\dagger)}\right)
\end{aligned}
\end{equation}
Rearranging, we have
\begin{equation}\label{eq:bound_Na_01}
 \forall i, \E{}{N_i^T(a_i^\dagger)}= T-\frac{M}{(M-1)\rho}\E{}{R_i^T}
 \end{equation} 
Applying a union bound over $M$ players,
\begin{equation}\label{eq:bound_Na_02}
\begin{aligned}
&T-\E{}{N^T(a^\dagger)}=\E{}{\sum_{t=1}^T\ind{a^t\neq a^\dagger}}\\
&=\E{}{\sum_{t=1}^T\ind{a_j^t\neq a_j^\dagger \text{ for some } j}}\le \E{}{\sum_{t=1}^T \sum_{j=1}^M \ind{a_j^t\neq a_j^\dagger}}\\
&= \sum_{j=1}^M \E{}{\sum_{t=1}^T\ind{a_j^t\neq a_j^\dagger}}=\sum_{j=1}^M \left(T-\E{}{N_j(a_j^\dagger)}\right)\\
&=\sum_{j=1}^M\frac{M}{(M-1)\rho}\E{}{R_i^T}=\frac{M^2}{(M-1)\rho} O(T^\alpha)=O(MT^\alpha).
\end{aligned}
\end{equation}
where the second-to-last equation is due to the no-regret assumption of the learner.
Therefore, we have $\E{}{N^T(a^\dagger)}= T-O(M T^{\alpha})$.

Next we bound the expected cumulative design cost. Note that by design $\ell^o(a^\dagger)=\ell(a^\dagger)$, thus when $a^t=a^\dagger$ by our assumption on the cost function we have $C(\ell^o, \ell, a^t)=0$.
On the other hand, when $a^t\neq a^\dagger$ by Lipschitz condition on the cost function we have $C(\ell^o, \ell, a^t)\le \eta M^{\frac{1}{p}}(U-L)$.
Therefore, the expected cumulative design cost is
\begin{equation}
\begin{aligned}
&\E{}{C^T}=\E{}{\sum_{t=1}^T C(\ell^o, \ell, a^t)}\\
&\le \eta M^{\frac{1}{p}}(U-L)\E{}{\sum_{t=1}^T\ind{a^t\neq a^\dagger}}\\
&=\eta M^{\frac{1}{p}}(U-L)\left(T-\E{}{N^T(a^\dagger)}\right)\\
&=\eta M^{\frac{1}{p}}(U-L)\sum_{j=1}^M\frac{M}{(M-1)\rho}\E{}{R_i^T}=O(\eta M^{1+\frac{1}{p}}T^\alpha),
\end{aligned}
\end{equation}
where the last equality used~\eqref{eq:bound_Na_02}.
\end{proof}

We have two corollaries from~\thmref{thm:attack_version_01}. First, the standard no-regret algorithm EXP3.P~\citep{bubeck2012regret} achieves $\E{}{R_i^T}=O(T^{\frac{1}{2}})$. Therefore,  by plugging $\alpha=\frac{1}{2}$ into~\thmref{thm:attack_version_01} we have:
\begin{corollary}
If the players use EXP3.P, the designer can achieve expected number of target plays $\E{}{N^T(a^\dagger)}=T-O(MT^\frac{1}{2})$ while incurring expected cumulative design cost $\E{}{C^T}=O(\eta M^{1+\frac{1}{p}}T^{\frac{1}{2}})$.
\end{corollary}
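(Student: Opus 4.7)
The plan is to reduce the corollary to a one-line substitution into \thmref{thm:attack_version_01}. The theorem is already stated parametrically in the players' no-regret exponent $\alpha\in[0,1)$, so the only work left is to verify that EXP3.P instantiates Assumption~2 with $\alpha=\tfrac{1}{2}$, and then read off the two bounds.

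First, I would recall the standard guarantee of EXP3.P from~\citep{bubeck2012regret}: against a (possibly non-oblivious) adversary producing bounded losses on a finite action set of size $A_i$, EXP3.P attains expected best-in-hindsight regret $O(\sqrt{A_i T \log A_i})$, which is $O(T^{1/2})$ when the action set size is treated as a problem-dependent constant. I would then check that the hypotheses required by this guarantee are met in our setting: by \lemref{lem:post-attack-cost} property~\ref{lem:post-attack-cost-prop1}, the redesigned losses lie in $\tilde{\L}=[L,U]$, so each player's observed loss stream is bounded; and by the remark following \defref{def:regret}, each player faces a non-oblivious adversary, which is precisely the regime in which the EXP3.P bound is proved. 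This establishes $\E{}{R_i^T}=O(T^{1/2})$ for every player $i$ when they all run EXP3.P, fulfilling Assumption~2 with $\alpha=\tfrac{1}{2}$.

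Having verified the hypothesis, I would invoke \thmref{thm:attack_version_01} with $\alpha=\tfrac{1}{2}$ and read off the two promised bounds: the target-play count becomes $\E{}{N^T(a^\dagger)}=T-O(MT^{1/2})$ and the cumulative design cost becomes $\E{}{C^T}=O(\eta M^{1+1/p}T^{1/2})$, matching the corollary verbatim.

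The only genuine conceptual point, rather than a calculation obstacle, is ensuring that EXP3.P's adversarial regret bound actually covers the game-playing setting where each player's opponent stream is generated endogenously by other learning agents. I would handle this by citing the non-oblivious-adversary remark already made in the paper, since the best-in-hindsight regret in~\eqref{eq:regret} is defined against exactly the realized opponent history $a_{-i}^1,\ldots,a_{-i}^T$, which is what EXP3.P's bound controls; no counterfactual reasoning about how opponents would have reacted to a different play of player $i$ is required.
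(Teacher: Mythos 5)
Your proposal is correct and matches the paper's own argument: both simply note that EXP3.P satisfies the no-regret assumption with $\alpha=\tfrac{1}{2}$ (the paper's remark on non-oblivious adversaries and the boundedness of the redesigned losses in $\tilde\L$ covering the hypotheses, as you observe) and then substitute $\alpha=\tfrac{1}{2}$ into \thmref{thm:attack_version_01}. No gaps; your extra verification of the EXP3.P hypotheses is a harmless elaboration of what the paper leaves implicit.
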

If the original game $\ell^o$ is two-player zero-sum, then the designer can also make the players think that $a^\dagger$ is a pure Nash equilibrium.
\begin{restatable}{corollary}{ColNE}
Assume $M=2$ and the original game $\ell^o$ is zero-sum. Then with the redesigned game $\ell$~\eqref{eq:interior_design}, the expected averaged policy $\E{}{\bar \pi^T_i}=\E{}{\frac{1}{T}\sum_t \pi_i^t}$ converges to a point mass on $a_i^\dagger$.
\end{restatable}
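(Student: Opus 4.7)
The plan is to derive the corollary as an almost immediate consequence of equation~\eqref{eq:bound_Na_01} in the proof of Theorem~\ref{thm:attack_version_01}, by expressing the expected count of target plays as a sum of expected policy weights at $a_i^\dagger$.

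The first step is the bookkeeping identity $\E{}{\ind{a_i^t=a_i^\dagger}}=\E{}{\pi_i^t(a_i^\dagger)}$, which follows from the tower property since $a_i^t\sim \pi_i^t$ conditional on the past. Summing over $t\in[T]$ and dividing by $T$ gives
\begin{equation*}
\E{}{\bar\pi_i^T(a_i^\dagger)}=\frac{1}{T}\sum_{t=1}^T \E{}{\pi_i^t(a_i^\dagger)}=\frac{\E{}{N_i^T(a_i^\dagger)}}{T}.
\end{equation*}

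The second step is to invoke equation~\eqref{eq:bound_Na_01}, which for $M=2$ reads $\E{}{N_i^T(a_i^\dagger)}=T-\frac{2}{\rho}\E{}{R_i^T}$. Combining with the no-regret rate $\E{}{R_i^T}=O(T^\alpha)$ and $\alpha\in[0,1)$ yields $\E{}{\bar\pi_i^T(a_i^\dagger)}=1-O(T^{\alpha-1})\to 1$ as $T\to\infty$. Since $\bar\pi_i^T$ is a probability distribution on the finite set $\A_i$, the total mass on $\A_i\setminus\{a_i^\dagger\}$ is at most $1-\bar\pi_i^T(a_i^\dagger)$, which vanishes in expectation. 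Hence $\E{}{\bar\pi_i^T}$ converges to the point mass at $a_i^\dagger$.

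There is essentially no hard step: once the tower-property identity is in hand, the corollary is a one-line specialization of equation~\eqref{eq:bound_Na_01}. The $M=2$ zero-sum hypothesis does not enter the proof itself; its role is interpretive. By Lemma~\ref{lem:post-attack-cost} property~\ref{lem:post-attack-cost-prop4} the redesigned game $\ell$ remains zero-sum, and since $a^\dagger$ is strictly dominant in $\ell$ it is in particular a pure Nash equilibrium, so convergence of the averaged policies to a point mass at $a^\dagger$ formalizes the claim that the designer has made the players' averaged play approach a pure Nash equilibrium.
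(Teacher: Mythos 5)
Your proposal is correct and follows essentially the same route as the paper's own proof: both use the tower-property identity $\E{}{\ind{a_i^t=a_i^\dagger}\mid \pi^t}=\pi_i^t(a_i^\dagger)$ to rewrite $\E{}{\bar\pi_i^T(a_i^\dagger)}$ as $\E{}{N_i^T(a_i^\dagger)}/T$, then invoke the bound $\E{}{N_i^T(a_i^\dagger)}=T-O(T^\alpha)$ from the proof of Theorem~\ref{thm:attack_version_01}. Your observation that the $M=2$ zero-sum hypothesis is only interpretive (justifying the Nash-equilibrium reading) also matches the paper's treatment.
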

\begin{proof}
The new game $\ell$ is also a two-player zero-sum game.
The players applying no-regret algorithm will have their average actions $\E{}{\bar \pi^T}$ converging to an approximate Nash equilibrium. 
We use $\pi_i^t(a)$ to denote the probability of player $i$ choosing action $a$ at round $t$. Next we compute $\E{}{\bar \pi_i^T(a^\dagger)}$. Note that this expectation is with respect to all the randomness during game playing, including the selected actions $a^{1:T}$ and policies $\pi^{1:T}$. For any $t$, when we condition on $\pi^t$, we have $\E{}{\ind{a_i^t=a}\mid \pi^t}=\pi_i^t(a)$. Therefore, we have $\forall i$
\begin{equation}
\begin{aligned}
&\E{}{\bar \pi^T_i(a_i^\dagger)}=\frac{1}{T}\E{}{ \sum_{t=1}^T\pi_i^t(a_i^\dagger)}\\
&= \frac{1}{T}\E{\pi^{1:T}}{\sum_{t=1}^T\E{a^t}{\ind{a_i^t=a_i^\dagger}\mid \pi^t}}\\
&=\frac{1}{T}\E{\pi^{1:T}}{\E{a^{1:T}}{\sum_{t=1}^T\ind{a_i^t=a_i^\dagger}\mid \pi^{1:T}}}\\
&=\frac{1}{T}\E{\pi^{1:T}}{\E{a^{1:T}}{N_i^T(a_i^\dagger)\mid \pi^{1:T}}}\\
&=\frac{1}{T}\E{}{N_i^T(a_i^\dagger)}=\frac{T-O(T^\alpha)}{T}\rightarrow 1.
\end{aligned}
\end{equation}
Therefore, asymptotically the players believe that $a_i^\dagger, i\in [M]$ form a Nash equilibrium.
\end{proof}

\subsection{Boundary Design}
When the original game has some $\ell^o(a^\dagger)$ values hitting the boundary of $\tilde \L$, the designer cannot apply Algorithm~\ref{alg:interior_design} directly because the loss of other action profiles cannot be increased or decreased further to make $a^\dagger$ a dominant strategy. 
However, a time-varying design can still ensure
$\E{}{N^T(a^\dagger)}=T-o(T)$ and $\E{}{C^T} = o(T)$. In Algorithm~\ref{alg:boundary_design}, we present the boundary design which is applicable to both boundary and interior $\ell^o(a^\dagger)$ values.

\begin{algorithm}
\caption{Boundary Design}
\label{alg:boundary_design}
\begin{algorithmic}[1]
\REQUIRE the target action profile $a^\dagger$; a loss vector $v\in \R^M$ whose elements are in the interior, i.e.,
$\forall i, v_i \in [L+\rho, U-\rho]$ for some $\rho>0$; the regret rate $\alpha$; $\epsilon\in (0, 1-\alpha)$; the time step $t$.
\ENSURE a time-varying game with loss $\ell^t$.
\STATE Use $v$ in place of $\ell^o(a^\dagger)$ in~\eqref{eq:interior_design} and apply the interior design~\ref{alg:interior_design}.
Call the resulting time-invariant game the ``source game'' $\underline \ell$. 
\STATE Define a ``destination game'' $\overline \ell$ where $\overline \ell(a)=\ell^o(a^\dagger), \forall a$.
\STATE Interpolate the source and destination games:
\begin{equation}\label{eq:boundary_design}
\ell^t= w_t \underline\ell + (1-w_t)\overline \ell
\end{equation}
where 
\begin{equation}\label{eq:wt}
w_t=t^{\alpha+\epsilon-1}
\end{equation}
  \end{algorithmic}
\end{algorithm}
The designer can choose an arbitrary loss vector $v$ as long as $v$ lies in the interior of $\tilde \L$. We give two exemplary choices of $v$.
\begin{enumerate}[leftmargin=*]
\item Let the average player cost of $a^\dagger$ be $\bar \ell(a^\dagger)=\sum_{i=1}^M\ell_i^o(a^\dagger)/M$, then if $\bar \ell(a^\dagger)\in(L, U)$, one could choose $v$ to be a constant vector with value $\bar \ell(a^\dagger)$. The nice property about this choice is that if $\ell^o$ is zero-sum, then $v$ is zero-sum, thus property~\ref{prop4:TVA} is satisfied and the redesigned game is zero-sum. However, note that when $\bar \ell(a^\dagger)$ does hit the boundary, the designer cannot choose this $v$. 

\item Choose $v$ to be a constant vector with value $(L+U)/2$. This choice is always valid, but may not preserve the zero-sum property of the original game unless $L=-U$.
\end{enumerate}
The designer applies the interior design on $v$ to obtain a ``source game'' $\underline \ell$. Note that the target action profile $a^\dagger$ strictly dominates in the source game. 
The designer also creates a ``destination game'' $\overline \ell(a)$ by repeating the $\ell^o(a^\dagger)$ entry everywhere.
The boundary algorithm then interpolates between the source and destination games with a decaying weight $w_t$.
Note
after interpolation~\eqref{eq:boundary_design}, the target $a^\dagger$ still dominates by roughly $w_t$. We design the weight $w_t$ as in~\eqref{eq:wt} so that cumulatively, the sum of $w_t$ grows with rate $\alpha+\epsilon$, which is faster than the regret rate $\alpha$. This is a critical consideration to enforce frequent play of $a^\dagger$. Also note that asymptotically, $\ell^t$ converges toward the destination game. Therefore, in the long run, when $a^\dagger$ is played the designer incurs diminishing cost, resulting in $o(T)$ cumulative design cost.

\begin{restatable}{lemma}{lemPostAttackCostVersionTwo}
\label{lem:post-attack-cost-version2}
The redesigned game~\eqref{eq:boundary_design} has the following properties.
\begin{enumerate}[leftmargin=*]
\item $\forall i, a, \ell^t_i(a)\in\tilde \L$, thus the loss function is valid.\label{prop1:TVA} 
\item For every player $i$, the target action $a_i^\dagger$ strictly dominates any other action by $(1-\frac{1}{M})\rho w_t$, i.e., $\ell_i^t(a_i, a_{-i})= \ell_i^t(a_i^\dagger, a_{-i})+(1-\frac{1}{M})\rho w_t, \forall i, t, a_i\neq a_i^\dagger,  a_{-i}$.\label{prop2:TVA}
\item $\forall t, C(\ell^o, \ell^t, a^\dagger)\le  \eta \|\ell^o(a^\dagger)-\ell^t(a^\dagger)\|_p w_t\le \eta (U-L)M^{\frac{1}{p}}w_t$\label{prop3:TVA}
\item If the original loss for the target action profile $\ell^o(a^\dagger)$ and the vector $v$ are both zero-sum, then $\forall t, \ell^t$ is zero-sum.\label{prop4:TVA}
\end{enumerate}
\end{restatable}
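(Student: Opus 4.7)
The plan is to prove each of the four properties by leveraging \lemref{lem:post-attack-cost} applied to the source game $\underline\ell$ (which is simply the interior design with $v$ substituted for $\ell^o(a^\dagger)$) and then tracking how those properties propagate through the convex combination $\ell^t = w_t\underline\ell + (1-w_t)\overline\ell$ with the constant destination game $\overline\ell$. A preliminary remark, which underlies each of the four arguments, is that since $\epsilon\in(0,1-\alpha)$ and $t\ge 1$, the weight $w_t = t^{\alpha+\epsilon-1}$ lies in $(0,1]$, so the interpolation is genuinely a convex combination.

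For property~\ref{prop1:TVA}, the source game $\underline\ell$ is valid by \lemref{lem:post-attack-cost} property~\ref{lem:post-attack-cost-prop1} applied to $v$ (whose coordinates satisfy the interior requirement by hypothesis), and $\overline\ell(a)=\ell^o(a^\dagger)$ is trivially valid since $\ell^o$ is a valid game. A coordinate-wise convex combination of values in $[L,U]$ stays in $[L,U]$, which gives the claim. For property~\ref{prop2:TVA}, I would exploit that $\overline\ell$ is constant in $a$, so $\overline\ell_i(a_i,a_{-i})-\overline\ell_i(a_i^\dagger,a_{-i})=0$; by \lemref{lem:post-attack-cost} property~\ref{lem:post-attack-cost-prop2} the same difference in $\underline\ell$ equals $(1-\tfrac{1}{M})\rho$; linearity of the interpolation then yields the advertised $(1-\tfrac{1}{M})\rho w_t$ gap.

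For property~\ref{prop3:TVA}, I would start from \lemref{lem:post-attack-cost} property~\ref{lem:post-attack-cost-prop3}, which gives $\underline\ell(a^\dagger)=v$, so $\ell^t(a^\dagger)=w_t v+(1-w_t)\ell^o(a^\dagger)$ and therefore $\ell^o(a^\dagger)-\ell^t(a^\dagger)=w_t(\ell^o(a^\dagger)-v)$. Each coordinate of this difference has absolute value at most $U-L$, so $\|\ell^o(a^\dagger)-v\|_p\le M^{1/p}(U-L)$; combining with the Lipschitz assumption on $C$ delivers both inequalities in the claim. For property~\ref{prop4:TVA}, \lemref{lem:post-attack-cost} property~\ref{lem:post-attack-cost-prop4} (applied with $v$ in place of $\ell^o(a^\dagger)$) implies that $\underline\ell$ is zero-sum whenever $v$ is, and $\overline\ell(a)=\ell^o(a^\dagger)$ inherits the zero-sum property of $\ell^o(a^\dagger)$; a convex combination of two zero-sum vectors is itself zero-sum, closing the argument.

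The four proofs are mechanical corollaries of \lemref{lem:post-attack-cost} combined with the linearity of the interpolation, so I do not anticipate a substantive obstacle. The only points demanding care are verifying the preliminary range $w_t\in(0,1]$ (which is exactly why the scheme fixes $\epsilon<1-\alpha$) and noting that property~\ref{prop3:TVA} requires only the coarse bound $v_i\in[L,U]$ on $v$ rather than the tighter $\rho$-interior condition.
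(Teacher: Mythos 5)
Your proposal is correct and follows essentially the same route as the paper's proof: each property is deduced from \lemref{lem:post-attack-cost} applied to the source game $\underline\ell$ together with linearity of the interpolation $\ell^t=w_t\underline\ell+(1-w_t)\overline\ell$ and the fact that $\overline\ell$ is constant. The only cosmetic difference is that for property~\ref{prop3:TVA} you substitute $\underline\ell(a^\dagger)=v$ explicitly, whereas the paper bounds $\|\ell^o(a^\dagger)-\underline\ell(a^\dagger)\|_p\le (U-L)M^{1/p}$ directly, which is the same estimate.
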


Given~\lemref{lem:post-attack-cost-version2}, we provide our second main result.
\begin{restatable}{theorem}{thmAttackVersionTwo}
\label{thm:attack_version_2}
$\forall \epsilon\in (0, 1-\alpha]$, a designer that uses Algorithm~\ref{alg:boundary_design} can achieve expected number of target plays $\E{}{N^T(a^\dagger)}=T-O(MT^{1-\epsilon})$ while incurring expected cumulative design cost $\E{}{C^T}=O(M^{1+\frac{1}{p}}T^{1-\epsilon}+M^{\frac{1}{p}}T^{\alpha+\epsilon})$.
\end{restatable}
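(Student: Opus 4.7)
The plan is to follow the structure of the proof of Theorem~\ref{thm:attack_version_01}, but handle the fact that in the boundary design the strict-dominance gap for the target action is $(1-\tfrac{1}{M})\rho w_t$ and thus decays with $t$. First I would bound the number of target plays. By property~\ref{prop2:TVA} of Lemma~\ref{lem:post-attack-cost-version2}, $a_i^\dagger$ strictly dominates every other action of player $i$ under $\ell^t$ for every $a_{-i}^t$, so the best-in-hindsight action in Definition~\ref{def:regret} is $a_i^\dagger$ and the round-$t$ contribution to player $i$'s regret is exactly $(1-\tfrac{1}{M})\rho w_t \,\ind{a_i^t \neq a_i^\dagger}$. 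Summing gives $\E{}{R_i^T} = (1-\tfrac{1}{M})\rho\,\E{}{\sum_{t=1}^T \ind{a_i^t\neq a_i^\dagger}\,w_t}$.

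The main obstacle is that this identity relates the sublinear regret to a \emph{$w_t$-weighted} count of deviations rather than the unweighted quantity $T-\E{}{N_i^T(a_i^\dagger)}$ needed to bound the number of target plays. I would resolve this by exploiting the monotonicity of $w_t = t^{\alpha+\epsilon-1}$ (non-increasing since $\alpha+\epsilon \le 1$), so that $w_t \ge w_T = T^{\alpha+\epsilon-1}$ for every $t\le T$. This yields $\sum_{t=1}^T \ind{a_i^t\neq a_i^\dagger} \le w_T^{-1}\sum_{t=1}^T \ind{a_i^t\neq a_i^\dagger}\,w_t$, and plugging in $\E{}{R_i^T} = O(T^\alpha)$ from the no-regret assumption gives $T-\E{}{N_i^T(a_i^\dagger)} = O(T^{1-\alpha-\epsilon}\cdot T^\alpha) = O(T^{1-\epsilon})$. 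A union bound over the $M$ players, exactly as in~\eqref{eq:bound_Na_02}, then produces $T-\E{}{N^T(a^\dagger)} = O(MT^{1-\epsilon})$.

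For the cumulative design cost I would split the $T$ rounds according to whether $a^t = a^\dagger$. On rounds where the target is played, property~\ref{prop3:TVA} of Lemma~\ref{lem:post-attack-cost-version2} bounds the per-round cost by $\eta(U-L)M^{1/p}w_t$; summing and using $\sum_{t=1}^T t^{\alpha+\epsilon-1} = O(T^{\alpha+\epsilon})$ contributes $O(M^{1/p}T^{\alpha+\epsilon})$. On the remaining rounds, the Lipschitz assumption on $C$ together with the valid-loss property~\ref{prop1:TVA} gives a per-round cost of at most $\eta(U-L)M^{1/p}$; the number of such rounds is $O(MT^{1-\epsilon})$ by the previous paragraph, contributing $O(M^{1+1/p}T^{1-\epsilon})$. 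Adding these yields the claimed $\E{}{C^T} = O(M^{1+1/p}T^{1-\epsilon} + M^{1/p}T^{\alpha+\epsilon})$.

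The principal difficulty, flagged above, is the shrinking dominance gap: one naturally controls only a $w_t$-weighted deviation count, and converting to an unweighted count costs a factor $1/w_T = T^{1-\alpha-\epsilon}$. The schedule $w_t = t^{\alpha+\epsilon-1}$ is exactly what balances this conversion penalty against the diminishing per-round perturbation cost, which is why the two competing terms $T^{1-\epsilon}$ and $T^{\alpha+\epsilon}$ appear in the final bound and why $\epsilon$ can be tuned by the designer to trade them off.
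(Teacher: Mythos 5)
Your proposal is correct and follows the same overall architecture as the paper's proof: the exact regret identity $\E{}{R_i^T}=(1-\tfrac{1}{M})\rho\,\E{}{\sum_t \ind{a_i^t\neq a_i^\dagger}w_t}$ from property~\ref{prop2:TVA}, a conversion from the $w_t$-weighted deviation count to the unweighted count, the union bound of~\eqref{eq:bound_Na_02}, and the target/non-target split of the cost using property~\ref{prop3:TVA} and $\sum_t w_t=O(T^{\alpha+\epsilon})$. The one place you genuinely diverge is the conversion step: the paper lower-bounds $\sum_t \ind{a_i^t\neq a_i^\dagger}w_t$ by rearranging the non-target rounds to the tail, i.e.\ by $\sum_{t=N_i^T(a_i^\dagger)+1}^T t^{\alpha+\epsilon-1}$, and then relaxes this via integral comparisons and the inequality $(1-x)^c\le 1-cx$ to obtain $T^{\alpha+\epsilon-1}\bigl(T-N_i^T(a_i^\dagger)\bigr)-\tfrac{1}{\alpha+\epsilon}$; you instead use the termwise bound $w_t\ge w_T=T^{\alpha+\epsilon-1}$, which gives $w_T\bigl(T-N_i^T(a_i^\dagger)\bigr)$ directly. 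Your route is more elementary and in fact marginally tighter (it avoids the additive $\tfrac{1}{\alpha+\epsilon}$ slack and the resulting extra $O(T^{1-\alpha-\epsilon})$ term), while the paper's rearranged sum is a sharper intermediate quantity that it ultimately relaxes back to essentially your bound; both yield $T-\E{}{N_i^T(a_i^\dagger)}=O(T^{1-\epsilon})$ per player and hence identical asymptotics for both $\E{}{N^T(a^\dagger)}$ and $\E{}{C^T}$. (Your pathwise inequality also passes cleanly through the expectation, and it covers the endpoint $\epsilon=1-\alpha$ where $w_t\equiv 1$, so no case is lost.)
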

\begin{remark}
By choosing a larger $\epsilon$ in~\thmref{thm:attack_version_2}, the designer can increase $\E{}{N^T(a^\dagger)}$. However, the cumulative design cost can grow. The design cost attains the minimum order $O\left(M^\frac{1}{p}(1+M)T^{\frac{1+\alpha}{2}}\right)$ when $\epsilon=\frac{1-\alpha}{2}$. The corresponding number of target action selection is $\E{}{N^T(a^\dagger)}=T-O(M T^{\frac{1+\alpha}{2}})$
\end{remark}
\begin{proof}
Under game redesign, the players are equivalently running no-regret algorithms over the game sequence $\ell^1, \ldots, \ell^T$ instead of $\ell^o(\cdot)$. 
By~\lemref{lem:post-attack-cost-version2} property~\ref{prop2:TVA}, $a_i^\dagger$ is always the optimal action for player $i$, and taking a non-target action results in $(1-1/M)\rho w_t$ regret regardless of $a_{-i}$, thus the expected regret of player $i$ is
\begin{equation}
\begin{aligned}
&\E{}{R_i^T} = \E{}{\sum_{t=1}^T \ind{a_i^t\neq a_i^\dagger}(1-\frac{1}{M})\rho w_t}\\
&=(1-\frac{1}{M})\rho \E{}{\sum_{t=1}^T \ind{a_i^t\neq a_i^\dagger} w_t}.
\end{aligned}
\label{eq:ERiT}
\end{equation}
Now note that $ w_t=t^{\alpha+\epsilon-1}$ is monotonically decreasing as $t$ grows, thus we have
\begin{equation}
\begin{aligned}
&\sum_{t=1}^T \ind{a_i^t\neq a_i^\dagger} w_t\ge \sum_{t=N_i(a_i^\dagger)+1}^T t^{\alpha+\epsilon-1}\\
&=\sum_{t=1}^T t^{\alpha+\epsilon-1}-\sum_{t=1}^{N_i(a_i^\dagger)} t^{\alpha+\epsilon-1}.
\end{aligned}
\end{equation}
Next, by examining the area under curve, we obtain
\begin{equation}
\begin{aligned}
\sum_{t=1}^T t^{\alpha+\epsilon-1}&\ge \int_{1}^{T} t^{\alpha+\epsilon-1} dt=\frac{1}{\alpha+\epsilon} T^{\alpha+\epsilon} - {1 \over \alpha+\epsilon}.
\end{aligned}
\end{equation}
Similarly, we can also derive
\begin{equation}
\begin{aligned}
\sum_{t=1}^{N_i(a_i^\dagger)} t^{\alpha+\epsilon-1}&\le \int_{0}^{N_i(a_i^\dagger)}t^{\alpha+\epsilon-1} dt=\frac{1}{\alpha+\epsilon}\left(N_i^T(a_i^\dagger)\right)^{\alpha+\epsilon}.
\end{aligned}
\end{equation}
Therefore, we have
\begin{equation}\label{eq:bound_ind}
\begin{aligned}
&\sum_{t=1}^T \ind{a_i^t\neq a_i^\dagger} w_t \ge \frac{1}{\alpha+\epsilon} \left(T^{\alpha+\epsilon}-\left(N_i^T(a_i^\dagger)\right)^{\alpha+\epsilon}\right) - {1 \over \alpha+\epsilon}\\
&=\frac{1}{\alpha+\epsilon}T^{\alpha+\epsilon}\left(1-(1-\frac{T-N_i^T(a_i^\dagger)  }{T})^{\alpha+\epsilon}\right) - {1 \over \alpha+\epsilon}\\
&\ge \frac{1}{\alpha+\epsilon}T^{\alpha+\epsilon} \frac{T-N_i^T(a_i^\dagger) }{T}(\alpha+\epsilon)- {1 \over \alpha+\epsilon}\\
&=T^{\alpha+\epsilon}-T^{\alpha+\epsilon-1} N_i^T(a_i^\dagger)- {1 \over \alpha+\epsilon}.
\end{aligned}
\end{equation}
The inequality follows from the fact $(1-x)^c \le 1-cx$ for $x,c \in (0,1)$.
Plug back in~\eqref{eq:ERiT} we have
\begin{equation}
\begin{aligned}
&\E{}{R_i^T}=(1-\frac{1}{M})\rho\E{}{\sum_{t=1}^T \ind{a_i^t\neq a_i^\dagger} w_t}\\
&\ge (1-\frac{1}{M})\rho\E{}{\left(T^{\alpha+\epsilon}-T^{\alpha+\epsilon-1} N_i^T(a_i^\dagger)- {1 \over \alpha+\epsilon}\right)}\\
&=(1-\frac{1}{M})\rho\left(T^{\alpha+\epsilon}-T^{\alpha+\epsilon-1}\E{}{N_i^T(a_i^\dagger)}- {1 \over \alpha+\epsilon}\right)
\end{aligned}
\end{equation}
As a result, we have
\begin{equation}
\begin{aligned}
\forall i, &\E{}{N_i^T(a_i^\dagger)}\ge T-\frac{M}{(M-1)\rho}\E{}{R_i^T} T^{1-\alpha-\epsilon}-\frac{1}{\alpha+\epsilon}T^{1-\alpha-\epsilon}\\
&=T-\frac{M}{(M-1)\rho}O(T^\alpha) T^{1-\alpha-\epsilon}-\frac{1}{\alpha+\epsilon}T^{1-\alpha-\epsilon}\\
&=T-O(T^{1-\epsilon})-O(T^{1-\alpha-\epsilon})\\
&=T-O(T^{1-\epsilon}).
\end{aligned}
\end{equation}
By a union bound similar to~\eqref{eq:bound_Na_02}, we have $\E{}{N^T(a^\dagger)}=T-O(MT^{1-\epsilon})$.

We now analyze the cumulative design cost. Note that by~\lemref{lem:post-attack-cost-version2} property~\ref{prop3:TVA}, when $a^t=a^\dagger$, $C(\ell^o, \ell^t, a^t)\le \eta (U-L)M^{\frac{1}{p}} w_t$. On the other hand, when $a^t\neq a^\dagger$, we have 
\begin{equation}
\begin{aligned}
C(\ell^o, \ell^t, a^t)&\le \eta \|\ell^o(a^t)-\ell^t(a^t)\|_p\le \eta(U-L) M^{\frac{1}{p}}.
\end{aligned}
\end{equation}
Therefore, the expected cumulative design cost is
\begin{equation}
\begin{aligned}
&\E{}{C^T} \le \eta(U-L) M^{\frac{1}{p}}\E{}{\sum_{t=1}^T\ind{a^t\neq a^\dagger}}\\
&+\eta (U-L)M^{\frac{1}{p}}\E{}{\sum_{t=1}^T\ind{a^t= a^\dagger} w_t}\\
&\le \eta (U-L)M^{\frac{1}{p}}(T-\E{}{N^T(a^\dagger)})+\eta (U-L)M^{\frac{1}{p}}\sum_{t=1}^T w_t.
\end{aligned}
\end{equation}
$T-\E{}{N^T(a^\dagger)}=O(MT^{1-\epsilon})$ is already proved. Also note that
\begin{equation}
\begin{aligned}
 \sum_{t=1}^T w_t&=\sum_{t=1}^T t^{\alpha+\epsilon-1}\le \int_{t=0}^T  t^{\alpha+\epsilon-1}=\frac{1}{\alpha+\epsilon} T^{\alpha+\epsilon}.
 \end{aligned}
\end{equation}
Therefore, we have
\begin{equation}
\begin{aligned}
\E{}{C^T}&\le (U-L)\eta M^{\frac{1}{p}}O(MT^{1-\epsilon})+\frac{\eta (U-L)}{\alpha+\epsilon} M^{\frac{1}{p}} T^{\alpha+\epsilon}\\
&=O(M^{1+\frac{1}{p}}T^{1-\epsilon}+M^{\frac{1}{p}}T^{\alpha+\epsilon}).
\end{aligned}
\end{equation}
\end{proof}

\begin{corollary}
Assume the no-regret learning algorithm is EXP3.P. Then by picking $\epsilon=\frac{1}{4}$ in~\thmref{thm:attack_version_2}, a designer can achieve expected number of target plays $\E{}{N^T(a^\dagger)}=T-O(MT^{\frac{3}{4}})$ while incurring $\E{}{C^T}=O\left(M^\frac{1}{p}(1+M)T^{\frac{3}{4}}\right)$ design cost.
\end{corollary}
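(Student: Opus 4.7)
The plan is to observe that this statement is a direct specialization of \thmref{thm:attack_version_2}, obtained by substituting the known regret rate of EXP3.P together with the prescribed choice of $\epsilon$. So the work is arithmetic bookkeeping rather than any new argument, and I would present it as a short deduction rather than a self-contained proof.

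First I would recall from \secref{sec:no-regret-learning} that EXP3.P attains expected regret $\E{}{R_i^T}=O(T^{1/2})$, so the no-regret assumption holds with $\alpha=\tfrac{1}{2}$. Since $\epsilon=\tfrac{1}{4}\in(0,1-\alpha]=(0,\tfrac{1}{2}]$, the hypotheses of \thmref{thm:attack_version_2} are satisfied. Next I would plug these values into the two conclusions of that theorem. For the target-play count, the bound $\E{}{N^T(a^\dagger)}=T-O(MT^{1-\epsilon})$ becomes $T-O(MT^{3/4})$, which is exactly the first claimed bound.

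Then I would do the same for the cumulative design cost. \thmref{thm:attack_version_2} yields $\E{}{C^T}=O(M^{1+\frac{1}{p}}T^{1-\epsilon}+M^{\frac{1}{p}}T^{\alpha+\epsilon})$. With $\alpha=\tfrac{1}{2}$ and $\epsilon=\tfrac{1}{4}$, both exponents $1-\epsilon$ and $\alpha+\epsilon$ equal $\tfrac{3}{4}$, so the two summands collapse to a single $T^{3/4}$ rate. Factoring $M^{1/p}$ from the $M$-dependence of both terms gives $O\!\left(M^{1/p}(1+M)T^{3/4}\right)$, matching the statement.

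The main (and essentially only) subtlety is the check that $\epsilon=\tfrac{1}{4}$ is admissible given $\alpha=\tfrac{1}{2}$, and the observation that equating the exponents $1-\epsilon=\alpha+\epsilon$ is precisely what minimizes the cost bound, which is why $\epsilon=\tfrac{1-\alpha}{2}=\tfrac{1}{4}$ is chosen; this is already remarked on after \thmref{thm:attack_version_2}. There is no real obstacle, and in particular no $\eta$ appears in the final bound because it is absorbed into the $O(\cdot)$ notation along with the constants $\rho$, $U-L$, and $1/(\alpha+\epsilon)$.
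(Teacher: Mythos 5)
Your proposal is correct and matches the paper's (implicit) reasoning exactly: the corollary is the specialization of \thmref{thm:attack_version_2} to $\alpha=\tfrac{1}{2}$ (the EXP3.P regret rate) and $\epsilon=\tfrac{1}{4}$, where the admissibility check $\epsilon\in(0,1-\alpha]$ and the collapse of the exponents $1-\epsilon=\alpha+\epsilon=\tfrac{3}{4}$ yield the stated bounds, in line with the remark that $\epsilon=\tfrac{1-\alpha}{2}$ minimizes the cost order. Nothing further is needed.
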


\subsection{Discrete Design}
\label{sec:attack_discrete_value}
In previous sections, we assumed the games $\ell^t$ can take arbitrary continuous values in the relaxed loss range $\tilde \L=[L, U]$. However, there are many real-world situations where continuous loss does not have a natural interpretation. For example, in the rock-paper-scissors game, the loss is interpreted as win, lose or tie, thus $\ell^t$ should only take value in the original loss value set $\L=\{-1, 0, 1\}$. 
We now provide a discrete redesign to convert any game $\ell^t$ with values in $\tilde \L$ into a game $\hat\ell^t$ only involving loss values $L$ and $U$, which are both in $\L$.
Specifically, the discrete design is illustrated in Algorithm~\ref{alg:discrete_design}.
\floatname{algorithm}{Algorithm}
\begin{algorithm}
\caption{Discrete Design}
    \label{alg:discrete_design}
      \begin{algorithmic}
      \REQUIRE the target action profile $a^\dagger$; a loss vector $v\in \R^M$ whose elements are in the interior, i.e.,
$\forall i, v_i \in [L+\rho, U-\rho]$ for some $\rho>0$; the regret rate $\alpha$; $\epsilon\in (0, 1-\alpha)$; the time step $t$.
\ENSURE a time-varying game with loss $\hat \ell^t\in \L$ as below:
    \STATE \begin{equation}\label{eq:discrete_design}
\forall i, a,  \hat \ell_i^t(a)=\left\{
\begin{array}{ll}
U & \mbox{ with probability $\frac{\ell_i^t(a)-L}{U-L}$ } \\
L & \mbox{ with probability $\frac{U-\ell_i^t(a)}{U-L}$}.
\end{array}
\right.
\end{equation}
  \end{algorithmic}
\end{algorithm}

It is easy to verify $\E{}{\hat\ell^t}=\ell^t$.
In experiments we show such discrete games also achieve the design goals.

\subsection{Thresholding the Redesigned Game}
For all designs in previous sections, the designer could impose an additional min or max operator to threshold on the original game loss, e.g., for the interior design, the redesigned game loss after thresholding becomes
\begin{equation}\label{eq:interior_design_minmax}
\forall i, a,  \ell_i(a)=\left\{
\begin{array}{ll}
\min\{\ell_i^o(a^\dagger)- (1-\frac{d(a)}{M})\rho, \ell^o(a)\} & \mbox{ if } a_i= a_i^\dagger, \\
\max\{\ell_i^o(a^\dagger)+\frac{d(a)}{M}\rho, \ell^o(a)\}& \mbox{ if } a_i\neq a_i^\dagger,
\end{array}
\right.
\end{equation}
We point out a few differences between~\eqref{eq:interior_design_minmax} and~\eqref{eq:interior_design}.  First,~\eqref{eq:interior_design_minmax} guarantees a dominance gap of ``at least'' (instead of exactly) $(1-\frac{1}{M})\rho$. As a result, the thresholded game can induce a larger $N^T(a^\dagger)$ because the target action $a^\dagger$ is redesigned to stand out even more.  Second, one can easily show that~\eqref{eq:interior_design_minmax} incurs less design cost $C^T$ compared to~\eqref{eq:interior_design} due to thresholding. \thmref{thm:attack_version_01} still holds.  However, thresholding no longer preserves the zero-sum property~\ref{lem:post-attack-cost-prop4} in~\lemref{lem:post-attack-cost} and~\lemref{lem:post-attack-cost-version2}. When such property is not required, the designer may prefer~\eqref{eq:interior_design_minmax} to slightly  improve the redesign performance. The thresholding also applies to the boundary and discrete designs.

\begin{table*}[t]
        \begin{minipage}{0.47\linewidth} 
        \begin{center}
               \begin{tabular}{cc|c|c|}
      & \multicolumn{1}{c}{} & \multicolumn{2}{c}{Other players}\\
      & \multicolumn{1}{c}{} & \multicolumn{1}{c}{exists a volunteer}  & \multicolumn{1}{c}{no volunteer exists} \\\cline{3-4}
      \multirow{2}*{Player $i$}  & volunteer & $0$ & $ 0$ \\\cline{3-4}
      & not volunteer & $-1$ & $10$ \\\cline{3-4}
    \end{tabular}%
    \vspace{0.3cm}
 \caption{The loss function $\ell^o_i$ for individual player $i=1 \ldots M$ in the Volunteer Dilemma.}
\label{tab:VD}
\end{center}
        \end{minipage}%
        \hfill
        \begin{minipage}{0.47\linewidth} 
                \begin{center}
                \begin{tabular}{cc|c|c|c|}
  \multicolumn{1}{c}{}    & \multicolumn{1}{c}{} & \multicolumn{3}{c}{Number of other volunteers}\\
   \multicolumn{1}{c}{}   & \multicolumn{1}{c}{} & \multicolumn{1}{c}{0}  & \multicolumn{1}{c}{1} & \multicolumn{1}{c}{2}\\\cline{3-5}
       \multirow{2}*{Player $i$}  & volunteer      & $-2/3$ & $-1/3$     & $0$ \\\cline{3-5}
      					     & not volunteer &\hspace {3mm} $10$ \hspace {3mm}             &  \hspace {3mm}$1/3$  \hspace {3mm}      & $2/3$ \\\cline{3-5}
    \end{tabular}
    \vspace{0.3cm}
\caption{The redesigned loss function $\ell_i$ for individual player $i$ in VD.}
\label{tab:VD_redesigned}
\end{center}
\end{minipage}
\vspace{-0.3cm}
    \end{table*} 
    
\section{Experiments}
We perform empirical evaluations of game redesign algorithms on four games --- the volunteer's dilemma (VD), tragedy of the commons (TC), prisoner's dilemma (PD) and rock-paper-scissors (RPS). Throughout the experiments, we use EXP3.P~\citep{bubeck2012regret} as the no-regret learner. The concrete form of the regret bound for EXP3.P is illustrated in the appendix~\ref{sec:exact_form}. Based on that, we derive the exact form of our theoretical upper bounds for~\thmref{thm:attack_version_01} and~\thmref{thm:attack_version_2} (see~\eqref{eq:interior_exact_bound_N}-\eqref{eq:boundary_exact_bound_C}), and we show the theoretical value for comparison in our experiments. We let the designer cost function be $C(\ell^o, \ell^t, a^t)=\|\ell^o(a^t)-\ell^t(a^t)\|_p$ with $p=1$. 
For VD, TC and PD, the original game is not zero-sum, and we apply the thresholding~\eqref{eq:interior_design_minmax} to slightly improve the redesign performance. For the RPS game, we apply the design without thresholding to preserve the zero-sum property. The results we show in all the plots are  produced by taking the average of 5 trials.

\subsection{Volunteer's Dilemma (VD)}\label{sec:VD_game}
In volunteer's dilemma (Table~\ref{tab:VD}) there are $M$ players. 
Each player has two actions: volunteer or not. When there exists at least one volunteer, those players who do not volunteer gain 1 (i.e. a $-1$ loss). The volunteers, however, receive zero payoff. On the other hand, if no players volunteer, then every player suffers a loss of 10.
As mentioned earlier, all pure Nash equilibria involve free-riders.
The designer aims at encouraging all players to volunteer, i.e., the target action profile $a_i^\dagger$ is ``volunteer'' for any player $i$. Note that $\forall i, \ell_i^o(a^\dagger)=0$, which lies in the interior of $\L=[-1, 10]$. Therefore, the designer could apply the interior design Algorithm~\ref{alg:interior_design}. The margin parameter is $\rho=1$. We let $M=3$.
In table~\ref{tab:VD_redesigned}, we show the redesigned game $\ell$. Note that when all three players volunteer (i.e., at $a^\dagger$), the loss is unchanged compared to $\ell^o$. Furthermore, regardless of the other  players, the action ``volunteer'' strictly dominates the action ``not volunteer'' by at least $(1-\frac{1}{M})\rho=\frac{2}{3}$ for every player. When there is no other volunteers, the dominance gap is $\frac{32}{3}\ge(1-\frac{1}{M})\rho$, which is due to the thresholding in~\eqref{eq:interior_design_minmax}.

We simulated play for $T=10^4,10^5,10^6,10^7$, respectively on this redesigned game $\ell$.
In Figure~\ref{fig:VD_TIA}\subref{fig:VD_TIA_N}, we show $T-N^T(a^\dagger)$ against $T$. The plot is in log scale. 
The standard deviation estimated from 5 trials is less than $3\%$ of the corresponding value and is hard to see in log-scale plot, thus we do not show that.
We also plot our theoretical upper bound in dashed lines for comparison. 
Note that the theoretical value indeed upper bounds our empirical results. 
In Figure~\ref{fig:VD_TIA}\subref{fig:VD_TIA_cost}, we show $C^T$ against $T$. 
Again, the theoretical upper bound holds.
As our theory predicts, for the four $T$'s the designer increasingly enforces $a^\dagger$ in 60\%, 82\%, 94\%, and 98\% of the rounds, respectively;  The per-round design costs $C^T/T$ decreases at 0.98, 0.44, 0.15, and 0.05, respectively.

\begin{figure}[t]
\begin{subfigure}{0.48\linewidth}
  \centering
  \includegraphics[width=1\textwidth]{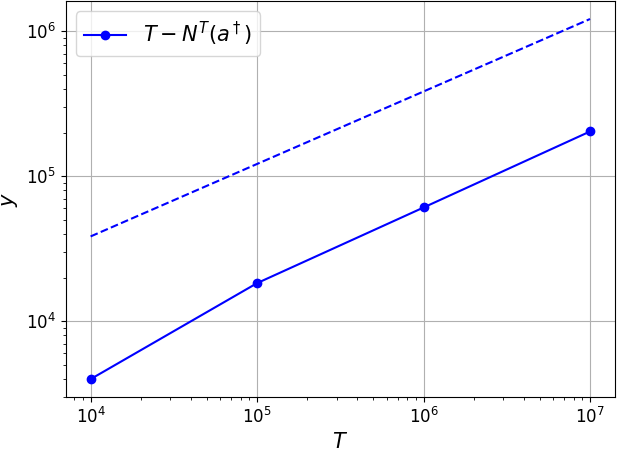}  
  \caption{Number of rounds with $a^t\neq a^\dagger$ grows sublinearly}
  \label{fig:VD_TIA_N}
\end{subfigure}
\hfill
\begin{subfigure}{0.48\linewidth}
  \centering
  \includegraphics[width=1\textwidth]{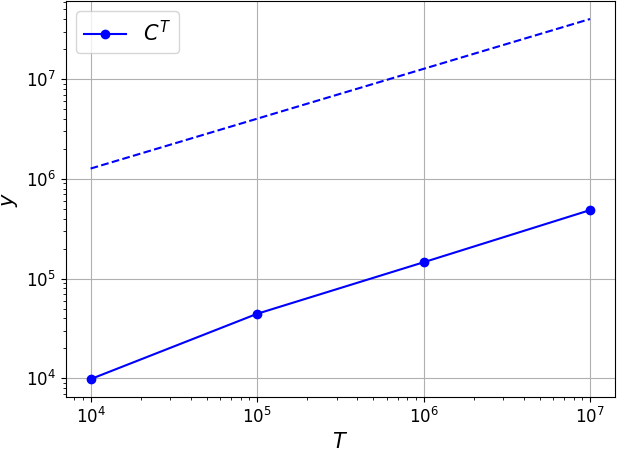}  
  \caption{The cumulative design cost grows sublinearly too}
  \label{fig:VD_TIA_cost}
\end{subfigure}
\caption{Interior design (Algorithm~\ref{alg:interior_design}) on VD with $M=3$. The dashed lines show the slope of sublinear rate $\sqrt{T}$ in log-log scale.}
\label{fig:VD_TIA}
\end{figure}

\begin{figure*}[t!]
\centering
\begin{subfigure}{0.92\textwidth}
\begin{subfigure}{0.325\textwidth}
  \centering
  \includegraphics[width=1\textwidth]{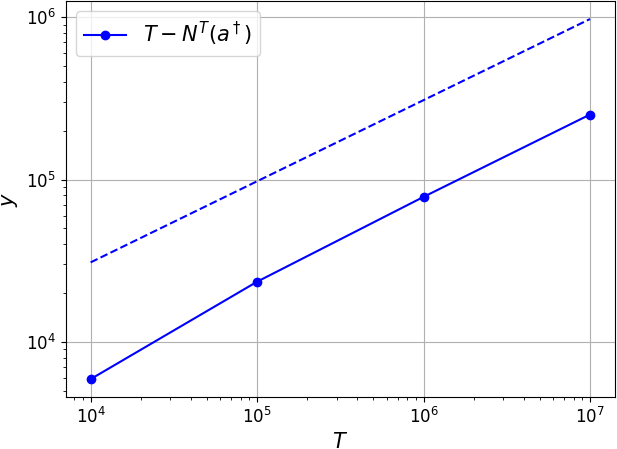}  
  \caption{Number of rounds with $a^t\neq a^\dagger$. The dashed line is the theoretical upper bound.}
  \label{fig:TC_TIA_N}
\end{subfigure}
\hfill
\begin{subfigure}{0.325\textwidth}
  \centering
  \includegraphics[width=1\textwidth]{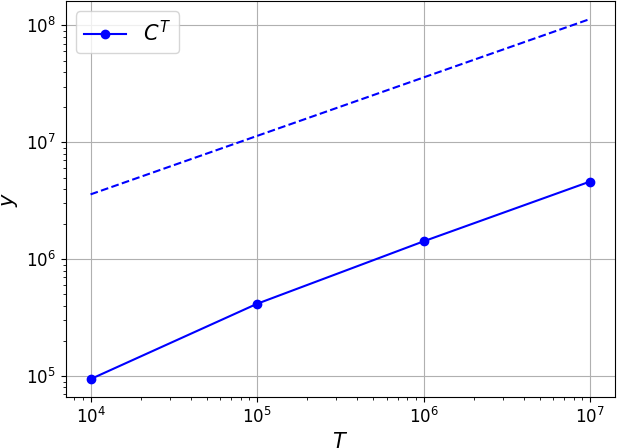}  
  \caption{The cumulative design cost. The dashed line is the theoretical upper bound.}
  \label{fig:TC_TIA_cost}
\end{subfigure}
\hfill
\begin{subfigure}{0.305\textwidth}
  \centering
  \includegraphics[width=1\textwidth]{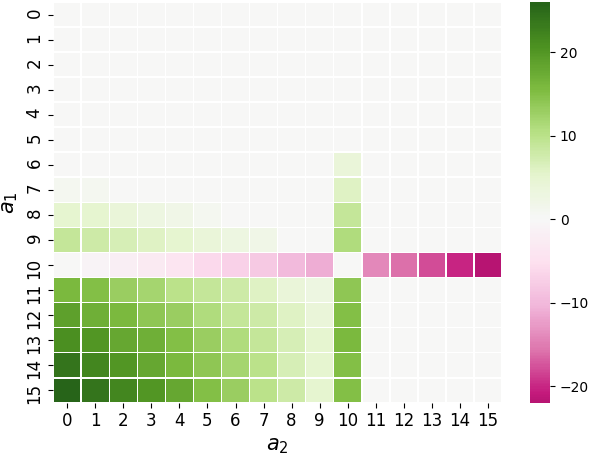}  
  \caption{Loss change $\ell_1(a)-\ell_1^o(a)$. When $a=a^\dagger=(10,10)$, the loss is unchanged.}
  \label{fig:TC_loss_change}
\end{subfigure}
\end{subfigure}
\caption{Interior design (Algorithm~\ref{alg:interior_design}) on Tragedy of the Commons.}
\label{fig:TC_TIA}
\vspace{-0.5cm}
\end{figure*}

\begin{table*}[htb]
        \begin{minipage}{0.33\linewidth} 
        \begin{center}
               \begin{tabular}{cc|c|c|c|}
  & \multicolumn{1}{c}{} & \multicolumn{2}{c}{} \\
  & \multicolumn{1}{c}{} & \multicolumn{1}{c}{mum}  & \multicolumn{1}{c}{fink}   \\\cline{3-4}
            & mum & $2,2$ & $5,1$  \\ \cline{3-4}
  & fink & $1,5$ & $4,4$  \\\cline{3-4}
\end{tabular}
\vspace{0.4cm}
\caption{The original loss $\ell^o$ of PD.}
\label{tab:PD_original}
\end{center}
        \end{minipage}%
        \hfill
        \begin{minipage}{0.33\linewidth} 
                \begin{center}
                \begin{tabular}{cc|c|c|c|}
  & \multicolumn{1}{c}{} & \multicolumn{2}{c}{} \\
  & \multicolumn{1}{c}{} & \multicolumn{1}{c}{mum}  & \multicolumn{1}{c}{fink}   \\\cline{3-4}
            & mum & $2,2$ & $1.5,2.5$  \\ \cline{3-4}
  & fink & $2.5,1.5$ & $4,4$  \\\cline{3-4}
\end{tabular}
\vspace{0.4cm}
\caption{The redesigned loss $\ell$ of PD.}
\label{tab:PD_attacked}
\end{center}
\end{minipage}
\hfill
\begin{minipage}{0.33\linewidth}
\begin{center}
\begin{tabular}{cc|c|c|c|}
  & \multicolumn{1}{c}{} & \multicolumn{3}{c}{} \\
  & \multicolumn{1}{c}{} & \multicolumn{1}{c}{$R$}  & \multicolumn{1}{c}{$P$}  & \multicolumn{1}{c}{$S$} \\\cline{3-5}
            & $R$ & $0,0$ & $1,-1$ & $-1,1$ \\ \cline{3-5}
  & $P$ & $-1,1$ & $0,0$ & $1,-1$ \\\cline{3-5}
            & $S$ & $1,-1$ & $-1,1$ & $0,0$ \\\cline{3-5}\
\end{tabular}
\caption{The original loss $\ell^o$ of RPS.}
\label{tab:RPS}
\end{center}
\end{minipage}
\vspace{-0.4cm}
    \end{table*} 

\subsection{Tragedy of the Commons (TC)}
Our second example is the tragedy of the commons (TC). There are $M=2$ farmers who share the same pasture to graze sheep. Each farmer $i$ is allowed to graze at most 15 sheep, i.e., the action space is $\A_i=\{0,1,...,15\}$. The more sheep are grazed, the less well fed they are, and thus less price on market. We assume the price of each sheep is $p(a)=\sqrt{30-\sum_{i=1}^2 a_i}$, where $a_i$ is the number of sheep that farmer $i$ grazes. The loss function of farmer $i$ is then $\ell_i^o(a)=-p(a)a_i$, 
i.e. negating the total price of the sheep that farmer $i$ owns. 
The Nash equilibrium strategy of this game is that every farmer grazes $a_i^*=\frac{60}{2M+1}=12$ sheep, and the resulting price of a sheep is $p(a^*)=\sqrt{6}$. 

It is well-known that this Nash equilibrium is suboptimal.
Instead, the designer hopes to maximize social welfare:
$$
\sqrt{30-\sum_{i=1}^2 a_i}\left( \sum_{i=1}^2 a_i\right),$$ 
which is achieved when $\sum_{i=1}^2 a_i=20$. Moreover, to promote equity the designer desires that the two farmers each graze the same number $20/M=10$ of sheep. 
Thus the designer has a target action profile $a_i^\dagger=10, \forall i$. 
Note that the original loss function takes value in $[-15\sqrt{15}, 0]$, while the loss of the target profile is $\ell_i^o(a^\dagger)=-10\sqrt{10}$, thus this is the interior design scenario, and the designer could apply Algorithm~\ref{alg:interior_design} to produce a new game $\ell$. 
Due to the large number of entries, we only visualize the difference $\ell_1(a)-\ell^o_1(a)$ for player 1 in Figure~\ref{fig:TC_loss_change}; the other player is the same.
We observe three patterns of loss change. For most $a$'s, e.g., $a_1\le 6$ or $a_2\ge 11$, the original loss $\ell_1^o(a)$ is already sufficiently large and satisfies the dominance gap in~\lemref{lem:post-attack-cost}, thus the designer leaves the loss unchanged. For those $a$'s where $a_1^\dagger=10$, the designer reduces the loss to make the target action more profitable. For those $a$'s close to the bottom left ($a_1> a_1^\dagger$ and $a_2\le 10$), the designer increases the loss to enforce the dominance gap $(1-\frac{1}{M})\rho$.

We simulated play for $T=10^4,10^5, 10^6$ and $10^7$ and show the results in Figure~\ref{fig:TC_TIA}. 
Again the game redesign is successful: the figures confirm $T- O(\sqrt{T})$ target action play 
and $O(\sqrt{T})$ cumulative design cost.
Numerically,
for the four $T$'s the designer enforces $a^\dagger$ in 41\%, 77\%, 92\%, and 98\% of rounds,
and the per-round design costs are 9.4, 4.2, 1.4, and 0.5, respectively.

\subsection{Prisoner's  Dilemma (PD)}
Out third example is the prisoner's dilemma (PD). There are two prisoners, each can stay mum or fink. The original loss function $\ell^o$ is given in Table~\ref{tab:PD_original}. The Nash equilibrium strategy of this game is that both prisoners fink. 
Suppose a mafia designer hopes to force $a^\dagger=$(mum, mum) by sabotaging the losses. Note that $\forall i, \ell_i^o(a^\dagger)=2$, which lies in the interior of the loss range $\L=[1, 5]$. Therefore, this is again an interior design scenario, and the designer can apply Algorithm~\ref{alg:interior_design}. In Table~\ref{tab:PD_attacked} we show the redesigned game $\ell$. 
Note that when both prisoners stay mum or both fink, the designer does not change the loss. On the other hand, when one prisoner stays mum and the other finks, the designer reduces the loss for the mum prisoner and increases the loss for the betrayer.

We simulated plays for $T=10^4, 10^5, 10^6$, and $10^7$, respectively.
In Figure~\ref{fig:PD_TIA} we plot the number of non-target action selections $T-N^T(a^\dagger)$ and the cumulative design cost $C^T$.  Both grow sublinearly.  
The designer enforces $a^\dagger$ in 85\%, 94\%, 98\%, and 99\% of rounds, and the per-round design costs are
0.71, 0.28, 0.09, and 0.03, respectively.

\begin{figure}[t]
\centering
\begin{subfigure}{0.48\linewidth}
  \centering
  \includegraphics[width=1\textwidth]{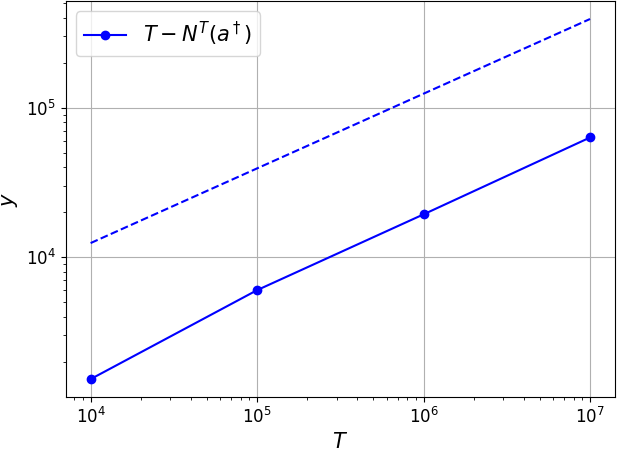}  
  \caption{Number of rounds with $a^t\neq a^\dagger$. The dashed line is the theoretical upper bound.}
  \label{fig:PD_TIA_N}
\end{subfigure}
\hfill
\begin{subfigure}{0.48\linewidth}
  \centering
  \includegraphics[width=1\textwidth]{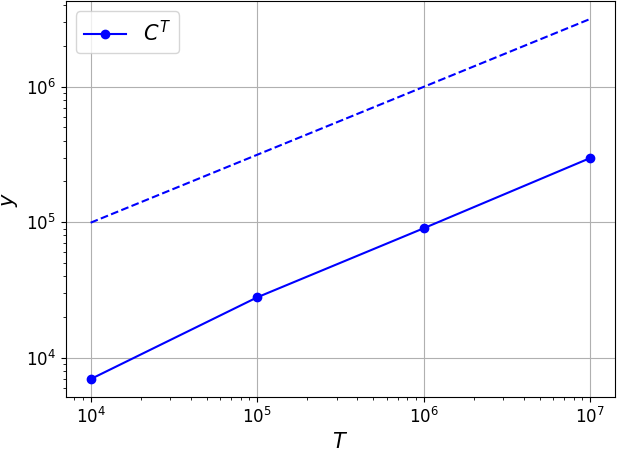}  
  \caption{The cumulative design cost. The dashed line is the theoretical upper bound.}
  \label{fig:PD_TIA_cost}
\end{subfigure}
\caption{Interior design on Prisoner's Dilemma.}
\label{fig:PD_TIA}
\vspace{-0.4cm}
\end{figure}

\begin{table*}[t!]
\hspace{-1cm}
\centering
\begin{minipage}[c]{0.33\textwidth}
\begin{tabular}{cc|c|c|c|}
  & \multicolumn{1}{c}{} & \multicolumn{3}{c}{} \\
  & \multicolumn{1}{c}{} & \multicolumn{1}{c}{$R$}  & \multicolumn{1}{c}{$P$}  & \multicolumn{1}{c}{$S$} \\\cline{3-5}
            & $R$ & $-0.5,0.5$ & $0,0$ & $-0.5,0.5$ \\ \cline{3-5}
  & $P$ & $0,0$ & $0.5,-0.5$ & $0,0$ \\\cline{3-5}
            & $S$ & $0,0$ & $0.5,-0.5$ & $0,0$ \\\cline{3-5}\
\end{tabular}
\caption*{(a)  $\ell^t (t=1)$.}
\label{tab:RPS_loss_t1}
\end{minipage}
\hspace{-0.9cm}
\begin{minipage}[c]{0.33\textwidth}
\begin{tabular}{cc|c|c|c|}
  & \multicolumn{1}{c}{} & \multicolumn{3}{c}{} \\
  & \multicolumn{1}{c}{} & \multicolumn{1}{c}{$R$}  & \multicolumn{1}{c}{$P$}  & \multicolumn{1}{c}{$S$} \\\cline{3-5}
            & $R$ & $0.62,-0.62$ & $0.75,-0.75$ & $0.62,-0.62$ \\ \cline{3-5}
  & $P$ & $0.75,-0.75$ & $0.87,-0.87$ & $0.75,-0.75$ \\\cline{3-5}
            & $S$ & $0.75,-0.75$ & $0.87,-0.87$ & $0.75,-0.75$ \\\cline{3-5}\
\end{tabular}
\caption*{\hspace{0.7cm} (b)  $\ell^t (t=10^3)$.}
\label{tab:RPS_loss_t2}
\end{minipage}
\begin{minipage}[c]{0.33\textwidth}
\begin{tabular}{cc|c|c|c|}
  & \multicolumn{1}{c}{} & \multicolumn{3}{c}{} \\
  & \multicolumn{1}{c}{} & \multicolumn{1}{c}{$R$}  & \multicolumn{1}{c}{$P$}  & \multicolumn{1}{c}{$S$} \\\cline{3-5}
            & $R$ & $0.94,-0.94$ & $0.96,-0.96$ & $0.94,-0.94$ \\ \cline{3-5}
  & $P$ & $0.96,-0.96$ & $0.98,-0.98$ & $0.96,-0.96$ \\\cline{3-5}
            & $S$ & $0.96,-0.96$ & $0.98,-0.98$ & $0.96,-0.96$ \\\cline{3-5}\
\end{tabular}
\caption*{\hspace{0.9cm}(c) $\ell^t (t=10^7)$.}
\label{tab:RPS_loss_tinf}
\end{minipage}
\caption{The redesigned RPS games $\ell^t$ for selected $t$ (with $\epsilon=0.3$).  Note the target entry $a^\dagger=(R,P)$ converges toward $(1,-1)$.}
\label{tab:RPS_loss_t}
\vspace{-0.8cm}
\end{table*}

\begin{table*}[t!]
\hspace{-1cm}
\begin{minipage}[c]{0.33\textwidth}
\centering
\begin{tabular}{cc|c|c|c|}
  & \multicolumn{1}{c}{} & \multicolumn{3}{c}{} \\
  & \multicolumn{1}{c}{} & \multicolumn{1}{c}{$R$}  & \multicolumn{1}{c}{$P$}  & \multicolumn{1}{c}{$S$} \\\cline{3-5}
            & $R$ & $1,1$ & $1,1$ & $-1,1$ \\ \cline{3-5}
  & $P$ & $-1,-1$ & $1,-1$ & $-1,-1$ \\\cline{3-5}
            & $S$ & $-1,1$ & $-1,-1$ & $-1,-1$ \\\cline{3-5}\
\end{tabular}
\caption*{\hspace{0.5cm}(a)  $\hat\ell^t (t=1)$.}
\label{tab:RPS_loss_t1_discrete}
\end{minipage}
\begin{minipage}[c]{0.33\textwidth}
\centering
\begin{tabular}{cc|c|c|c|}
  & \multicolumn{1}{c}{} & \multicolumn{3}{c}{} \\
  & \multicolumn{1}{c}{} & \multicolumn{1}{c}{$R$}  & \multicolumn{1}{c}{$P$}  & \multicolumn{1}{c}{$S$} \\\cline{3-5}
            & $R$ & $1,-1$ & $1,1$ & $-1,-1$ \\ \cline{3-5}
  & $P$ & $1,-1$ & $1,-1$ & $1,-1$ \\\cline{3-5}
            & $S$ & $1,-1$ & $1,-1$ & $1,1$ \\\cline{3-5}\
\end{tabular}
\caption*{\hspace{1cm}(b)  $\hat\ell^t (t=10^3)$.}
\label{tab:RPS_loss_t2_discrete}
\end{minipage}
\begin{minipage}[c]{0.33\textwidth}
\centering
\begin{tabular}{cc|c|c|c|}
  & \multicolumn{1}{c}{} & \multicolumn{3}{c}{} \\
  & \multicolumn{1}{c}{} & \multicolumn{1}{c}{$R$}  & \multicolumn{1}{c}{$P$}  & \multicolumn{1}{c}{$S$} \\\cline{3-5}
            & $R$ & $1,-1$ & $1,-1$ & $1,-1$ \\ \cline{3-5}
  & $P$ & $1,-1$ & $1,-1$ & $1,-1$ \\\cline{3-5}
            & $S$ & $1,-1$ & $1,-1$ & $1,-1$ \\\cline{3-5}\
\end{tabular}
\caption*{\hspace{1cm}(c) $\hat\ell^t (t=10^7)$.}
\label{tab:RPS_loss_tinf_discrete}
\end{minipage}
\caption{Instantiation of discrete design on the same games as in Table~\ref{tab:RPS_loss_t}.  The redesigned loss lies in $\L=\{-1,0,1\}$.}
\label{tab:RPS_loss_t_discrete}
\end{table*}

\begin{figure*}[t!]
\centering
\begin{minipage}{0.48\textwidth}
\begin{subfigure}{1\textwidth}
\begin{subfigure}{0.48\textwidth}
  \includegraphics[width=1\textwidth]{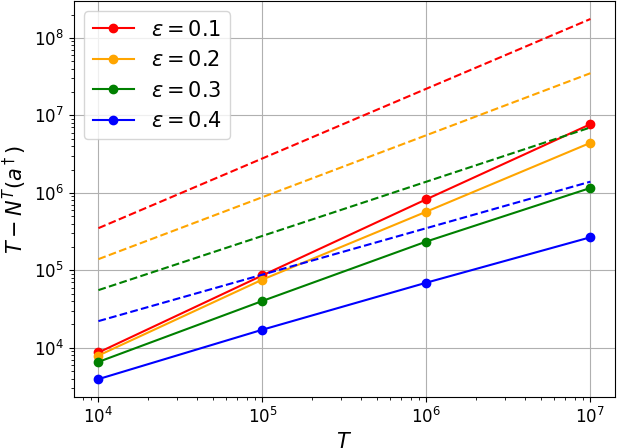}  
  \caption{Number of rounds with $a^t\neq a^\dagger$. The dashed lines are the theoretical upper bound.}
  \label{fig:RPS_TVA_N}
\end{subfigure}
\hfill
\begin{subfigure}{0.48\textwidth}
  \includegraphics[width=1\textwidth]{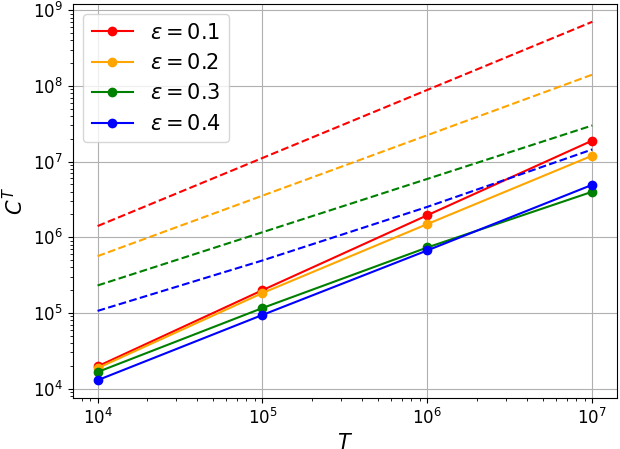}  
  \caption{The cumulative design cost. The dashed lines are the theoretical upper bound.}
  \label{fig:RPS_TVA_cost}
\end{subfigure}
\end{subfigure}
\caption{Boundary design on RPS.}
\label{fig:RPS_TVA}
\end{minipage}
\hfill
\begin{minipage}{0.48\textwidth}
\begin{subfigure}{1\textwidth}
\begin{subfigure}{0.48\textwidth}
  \centering
  \includegraphics[width=1\textwidth]{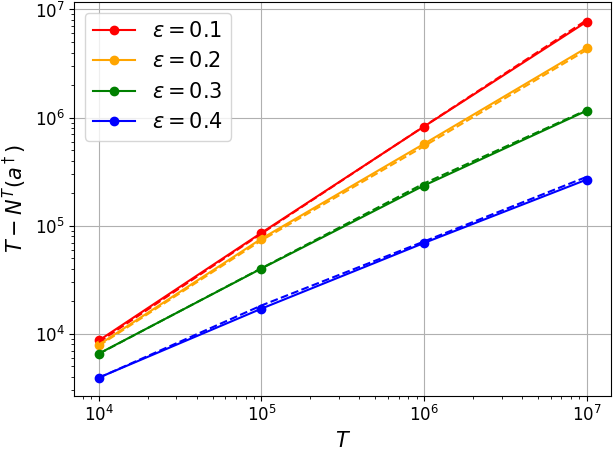}  
  \caption{Number of rounds  $a^t\neq a^\dagger$} 
  \label{fig:RPS_TVA_prob_N}
\end{subfigure}
\hfill
\begin{subfigure}{0.48\textwidth}
  \centering
  \includegraphics[width=1\textwidth]{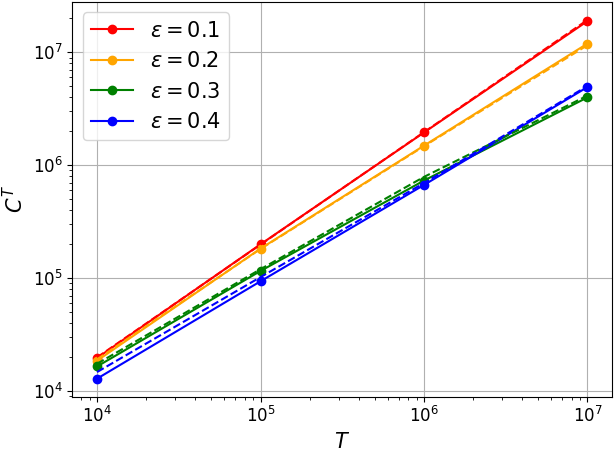}  
  \caption{Cumulative design cost $C^T$}
  \label{fig:RPS_TVA_prob_cost}
\end{subfigure}
\end{subfigure}
\caption{Discrete redesign for $a^\dagger=(R,P)$ with natural loss values in $\L$.  The dashed lines are the corresponding boundary design with unnatural loss values.}
\label{fig:RPS_TVA_prob}
\end{minipage}
\end{figure*}

\subsection{Rock-Paper-Scissors (RPS)}
While redesigning RPS does not have a natural motivation, it serves as a clear example on how boundary design and discrete design can be carried out on more socially-relevant games. 
The original game $\ell^o$ is in Table~\ref{tab:RPS}.

\textbf{Boundary Design.}
Suppose the designer target action profile is $a^\dagger=(R, P)$, namely making the row player play Rock while the column player play Paper. 
Because $\ell^o(a^\dagger)=(1,-1)$ hits the boundary of loss range $\tilde \L=[-1,1]$, 
the designer can use the Boundary Design Algorithm~\ref{alg:boundary_design}. 
For simplicity we choose $v$ with $v_i=\frac{U+L}{2}, \forall i$. Because in RPS $U=-L=1$, this choice of $v$ also preserves the zero-sum property. 
Table~\ref{tab:RPS_loss_t} shows the redesigned games at $t=1, 10^3$ and $10^7$ under $\epsilon=0.3$. Note that the designer maintains the zero-sum property of the games. Also note that the redesigned loss function always guarantees strict dominance of $a^\dagger$ for all $t$, but the dominance gap decreases as $t$ grows. Finally, the loss of the target action $a^\dagger=(R, P)$ converges to the original loss $\ell^o(a^\dagger)=(1, -1)$ asymptotically, thus the designer incurs diminishing design cost.

We ran Algorithm~\ref{alg:boundary_design} under four values of $\epsilon=0.1,0.2,0.3,0.4$, resulting in four game sequence $\ell^t$.
For each $\epsilon$ we simulated game play for $T=10^4, 10^5, 10^6$ and $10^7$. In Figure~\ref{fig:RPS_TVA}\subref{fig:RPS_TVA_N}, we show $T-N^T(a^\dagger)$ under different $\epsilon$ (solid lines). We also show the theoretical upper bounds of~\thmref{thm:attack_version_2} (dashed lines) for comparison. 
In Figure~\ref{fig:RPS_TVA}\subref{fig:RPS_TVA_cost}, we show the cumulative design cost $C^T$ under different $\epsilon$.  
The theoretical values indeed upper bound our empirical results. Furthermore, all non-target action counts and all cumulative design costs grow only sublinearly. 
As an example,
under $\epsilon=0.3$ for the four $T$'s the designer forces $a^\dagger$ in 34\%, 60\%, 76\%, and 88\% rounds, respectively.
The per-round design costs are 1.7, 1.2, 0.73 and 0.40, respectively. The results are similar for the other $\epsilon$'s. 
Choosing the best $\epsilon$ and $v$ is left as future work. We note that empirically the cumulative design cost achieves the minimum at some $\epsilon\in(0.3, 0.4)$ while~\thmref{thm:attack_version_2} suggests that the minimum cost is at $\epsilon^*=0.25$ instead. We investigate this inconsistency in the appendix~\ref{sec:optimal_eps}.

\textbf{Discrete Design.}
In our second experiment on RPS, we compare the performance of discrete design (Algorithm~\ref{alg:discrete_design}) with the deterministic boundary design (Algorithm~\ref{alg:boundary_design}).
Again, the target action profile is $a^\dagger=(R,P)$.
Recall the purpose of discrete design is to only use natural game loss values, in the RPS case $\L=\{-1,0,1\}$ instead of unnatural real values in the relaxed $\tilde\L=[-1,1]$, to make the redesign ``less detectable'' by players.  
We hope to show that discrete design does not lose much potency even with this value restriction.
Figure~\ref{fig:RPS_TVA_prob} shows this is indeed the case.  Discrete design performance nearly matches boundary design.
For example when $\epsilon=0.3$, for the four $T$'s discrete design enforces $a^\dagger$ 35\%, 59\%,75\% and 88\% of the time.
The per-round design costs are 1.7, 1.2, 0.79, and 0.41, respectively.
Overall, discrete design does not lose much performance and may be preferred by designers. Table~\ref{tab:RPS_loss_t_discrete} shows the redesigned ``random'' games at $t=1, 10^3$ and $10^7$ under $\epsilon=0.3$. Note that the loss lies in the natural range $\L=\{-1,0,1\}$. Also note that the loss function converges to be a constant function that takes the target loss value $\ell^o(a^\dagger)$. Finally, we point out that in general, the discrete design does not preserve the zero-sum property.

\section{Conclusion and Future Work}
In this paper, we studied the problem of game redesign where the players apply no-regret algorithms to play the game. We show that a designer can force all players to play a target action profile in $T-o(T)$ rounds while incurring only $o(T)$ cumulative design cost. We develop redesign algorithms for both the interior and the boundary target loss scenarios. Experiments on four game examples demonstrate the performance of our redesign algorithms.
Future work could study 
defense mechanisms 
to mitigate the effect of game redesign when the designer is malicious; or when the designer is one of the players 
with more knowledge than other players and willing to intentionally lose.
\newpage
\bibliography{ref}
\bibliographystyle{abbrv}

\newpage
\begin{appendix}
\clearpage
%

\section{Additional Proofs}

\lemPostAttackCost*
\begin{proof}
The redesigned game~\eqref{eq:interior_design} is given by
\begin{equation}\label{eq:interior_design_copy}
\forall i, a,  \ell_i(a)=\left\{
\begin{array}{ll}
\ell_i^o(a^\dagger)- (1-\frac{d(a)}{M})\rho & \mbox{ if } a_i= a_i^\dagger, \\
\ell_i^o(a^\dagger)+\frac{d(a)}{M}\rho& \mbox{ if } a_i\neq a_i^\dagger,
\end{array}
\right.
\end{equation}
where $d(a)=\sum_{j=1}^M \ind{a_j=a_j^\dagger}$.

\begin{enumerate}[leftmargin=*]

\item

Both branches of $\ell_i(a)$ are lower bounded by $L$:
\begin{equation}
\begin{aligned}
\ell_i^o(a^\dagger)- (1-\frac{d(a)}{M})\rho &\ge \ell_i^o(a^\dagger)-\rho\ge L.
\end{aligned}
\end{equation}
\begin{equation}
\begin{aligned}
\ell_i^o(a^\dagger)+\frac{d(a)}{M}\rho &\ge \ell_i^o(a^\dagger)\ge L.
\end{aligned}
\end{equation}
Both branches are upper bounded by $U$:
\begin{equation}
\begin{aligned}
\ell_i^o(a^\dagger)- (1-\frac{d(a)}{M})\rho &\le \ell_i^o(a^\dagger)\le U.
\end{aligned}
\end{equation}
\begin{equation}
\begin{aligned}
\ell_i^o(a^\dagger)+\frac{d(a)}{M}\rho &\le \ell_i^o(a^\dagger) + \rho\le U.
\end{aligned}
\end{equation}
Therefore, $\ell_i(a) \in [L, U]=\tilde \L$.

\item
Fix $i\in[M]$.
$\forall a_{-i}$, let $a=(a_i, a_{-i})$ for some $a_i\neq a_i^\dagger$, and $b=(a_i^\dagger, a_{-i})$, then we have $d(b)=d(a)+1$, thus 
\begin{equation}
\begin{aligned}
&\ell_i(a)-\ell_i(b)=\ell_i^o(a^\dagger)+\frac{d(a)}{M}\rho-\ell_i^o(a^\dagger)+(1-\frac{d(b)}{M})\rho\\
&=(1-\frac{1}{M})\rho.
\end{aligned}
\end{equation}
Therefore, for player $i$ the target action $a_i^\dagger$ strictly dominates any other actions by $(1-\frac{1}{M})\rho$.

\item
When $a=a^\dagger$, we have $d(a)=M$, thus by our design, we have $\forall i$,
\begin{equation}
\begin{aligned}
\ell_i(a^\dagger) &= \ell^o_i(a^\dagger)-(1-\frac{d(a)}{M})\rho=\ell^o_i(a^\dagger)-(1-\frac{M}{M})\rho=\ell^o_i(a^\dagger).
\end{aligned}
\end{equation}

\item
Fix $a$, we sum over all players to obtain
\begin{equation}
\begin{aligned}
&\sum_{i=1}^M \ell_i(a)=\sum_{i:a_i=a_i^\dagger} \left(\ell_i^o(a^\dagger)-(1-\frac{d(a)}{M})\rho\right)+\\
&\sum_{i:a_i\neq a_i^\dagger} \left(\ell_i^o(a^\dagger)+\frac{d(a)}{M}\rho\right)\\
&=\sum_i \ell_i^o(a^\dagger)-d(a)(1-\frac{d(a)}{M})\rho+(M-d(a))\frac{d(a)}{M}\rho\\
&=\sum_{i=1}^M \ell_i^o(a^\dagger)=0.
\end{aligned}
\end{equation}
\end{enumerate}
\end{proof}

\lemPostAttackCostVersionTwo*
\begin{proof}
The redesigned game~\eqref{eq:boundary_design} is given by
\begin{equation}\label{eq:boundary_design_copy}
\ell^t= w_t \underline\ell + (1-w_t)\overline \ell
\end{equation}
where 
\begin{equation}\label{eq:wt_copy}
w_t=t^{\alpha+\epsilon-1}
\end{equation}

\begin{enumerate}[leftmargin=*]

\item 
Note that $\underline \ell$ is valid, as we have proved in~\lemref{lem:post-attack-cost} property~\ref{lem:post-attack-cost-prop1}, thus $\underline \ell\in[L, U]$. Also note that $\overline \ell\in [L, U]$. Therefore, $\ell^t=w_t \underline \ell+(1- w_t)\overline \ell\in [L, U]$.

\item

$\forall i$ and $\forall a_{-i}$, let $a=(a_i, a_{-i})$ for some $a_i\neq a_i^\dagger$, and let $b=(a_i^\dagger, a_{-i})$, then according to~\lemref{lem:post-attack-cost} property~\ref{lem:post-attack-cost-prop2}, we have
\begin{equation}
\underline \ell(a)-\underline \ell(b)=(1-\frac{1}{M})\rho.
\end{equation}
Therefore, we have
\begin{equation}
\begin{aligned}
&\ell^t(a)-\ell^t(b)=(1- w_t)\overline \ell(a)+ w_t \underline \ell(a)-(1- w_t)\overline \ell(b)+ w_t \underline \ell(b)\\
&=(1- w_t)\ell^o(a^\dagger)+ w_t \underline \ell(a)-(1- w_t)\ell^o(a^\dagger)+ w_t \underline \ell(b)\\
&= w_t\left(\underline \ell(a)-\underline \ell(b)\right)=(1-\frac{1}{M})\rho w_t.
\end{aligned}
\end{equation}

\item 

Note that we have
\begin{equation}
\begin{aligned}
&\ell^o(a^\dagger)-\ell^t(a^\dagger)=\ell^o(a^\dagger)-\left(w_t\underline \ell(a^\dagger)+(1- w_t)\ell^o(a^\dagger)\right)\\
&=w_t\left(\ell^o(a^\dagger)-\underline \ell(a^\dagger)\right).
\end{aligned}
\end{equation}
Therefore, we have 
\begin{equation}
\begin{aligned}
&C(\ell^o, \ell^t, a^\dagger)\le \eta\|\ell^o(a^\dagger)- \ell^t(a^\dagger)\|_p\\
&= \eta w_t\|\ell^o(a^\dagger)-\underline \ell(a^\dagger)\|_p\\
&\le \eta (U-L)M^{\frac{1}{p}} w_t.
\end{aligned}
\end{equation}

\item
If the loss vector $v$ is zero-sum, then by~\lemref{lem:post-attack-cost} property~\ref{lem:post-attack-cost-prop4} $\underline \ell$ is a zero-sum game. If $\ell^o(a^\dagger)$ is also zero-sum, then we have
\begin{equation}
\begin{aligned}
&\sum_{i=1}^M \ell^t_i(a)=\sum_{i=1}^M \left(w_t\underline \ell_i(a)+(1- w_t)\ell_i^o(a^\dagger)\right)\\
&=w_t \sum_{i=1}^N\underline \ell_i(a)+(1- w_t)\sum_{i=1}^M \ell_i^o(a^\dagger)=0.
\end{aligned}
\end{equation}
\end{enumerate}
\end{proof}

\section{Exact Form of the Theoretical Upper Bounds}\label{sec:exact_form}
According to Theorem 3.4 in~\cite{bubeck2012regret}, the EXP3.P achieves expected regret bound
\begin{equation}
\E{}{R^T}\le 5.15\sqrt{TA_i\log A_i}+\sqrt{\frac{TA_i}{\log A_i}}.
\end{equation}
where $A_i=|\A_i|$ is the size of the action space of player $i$. Note that, however,~\cite{bubeck2012regret} assumes the loss takes value in [0, 1], while we assume the loss lies in $[L, U]$. Therefore, the regret bound should boost by $U-L$, i.e., we have
\begin{equation}
\forall i, \E{}{R_i^T}\le(U-L) \left(5.15\sqrt{TA_i\log A_i}+\sqrt{\frac{TA_i}{\log A_i}}\right).
\end{equation}
Plug the above regret bound into the proofs of~\thmref{thm:attack_version_01} and~\thmref{thm:attack_version_2}, we obtain the following exact form of the theoretical upper bounds.

For the interior design Algorithm~\ref{alg:interior_design}, we have
\begin{equation}\label{eq:interior_exact_bound_N}
\begin{aligned}
&T-\E{}{N^T(a^\dagger)}\le \sum_{j=1}^M\frac{M}{(M-1)\rho}\E{}{R_i^T}\\
&=\frac{M(U-L)}{(M-1)\rho} \sum_{i=1}^M\left(5.15\sqrt{TA_i\log A_i}+\sqrt{\frac{TA_i}{\log A_i}}\right)
\end{aligned}
\end{equation}
and
\begin{equation}\label{eq:interior_exact_bound_C}
\begin{aligned}
&T-\E{}{N^T(a^\dagger)}\le \eta M^{\frac{1}{p}}(U-L)\sum_{j=1}^M\frac{M}{(M-1)\rho}\E{}{R_i^T}\\
&=\frac{\eta M^{1+\frac{1}{p}}(U-L)^2}{(M-1)\rho} \sum_{i=1}^M\left(5.15\sqrt{TA_i\log A_i}+\sqrt{\frac{TA_i}{\log A_i}}\right)
\end{aligned}
\end{equation}

For the boundary design Algorithm~\ref{alg:boundary_design}, we have
\begin{equation}\label{eq:boundary_exact_bound_N}
\begin{aligned}
&T-\E{}{N^T(a^\dagger)}\le \sum_{i=1}^M \left(\frac{M}{(M-1)\rho}\E{}{R_i^T} T^{1-\alpha-\epsilon}+\frac{1}{\alpha+\epsilon}T^{1-\alpha-\epsilon}\right)\\
&=\left(\frac{M(U-L)}{(M-1)\rho} \sum_{i=1}^M\left(5.15\sqrt{TA_i\log A_i}+\sqrt{\frac{TA_i}{\log A_i}}\right)+\frac{M}{\alpha+\epsilon}\right)T^{1-\alpha-\epsilon}.
\end{aligned}
\end{equation}
and
\begin{equation}\label{eq:boundary_exact_bound_C}
\begin{aligned}
&\E{}{C^T}\le  \eta (U-L)M^{\frac{1}{p}}(T-\E{}{N^T(a^\dagger)})+\eta (U-L)M^{\frac{1}{p}}\sum_{t=1}^T w_t\\
&\le  \eta (U-L)M^{\frac{1}{p}}\times \text{\eqref{eq:boundary_exact_bound_N}}+\frac{\eta (U-L)M^{\frac{1}{p}}}{\alpha+\epsilon}T^{\alpha+\epsilon}.
\end{aligned}
\end{equation}

\section{Minimum Cumulative Design Cost}
\label{sec:optimal_eps}

\thmref{thm:attack_version_2} suggests that the minimum cost is achieved at $\epsilon^*=\frac{1-\alpha}{2}=0.25$, while Figure~\ref{fig:RPS_TVA}\subref{fig:RPS_TVA_cost} implies that the cost is minimum at some $\epsilon \in (0.3, 0.4)$. We believe the inconsistency is due to not large enough horizon $T$. We now experiment with slightly larger $T$ for the RPS game with $a^\dagger=(R, P)$. Specifically, we let $T=10^6,10^7, 10^8$ and $\epsilon=0.1, 0.2, 0.25, 0.3,0.4$. In Figure~\ref{fig:RPS_TVA_NC}, we plot $\log (T-N^T(a^\dagger))$ against $\log C^T$ and we marked out the corresponding $\epsilon$ values on the curve. Note that for different $T$, the pattern remains the same -- as $\epsilon$ grows, $\log(T-\log N^T(a^\dagger))$ decreases monotonically, while $\log C^T$ first reduces and then increases. We also note that as $T$ becomes larger, the $\epsilon$ with the minimum cumulative design cost becomes closer to $\epsilon^*=0.25$. We anticipate that as $T$ grows even larger (e.g., $10^{10}$), the cumulative design cost will achieve the minimum at exactly $\epsilon^*=0.25$.

\begin{figure}[t!]
\centering
  \includegraphics[width=0.35\textwidth]{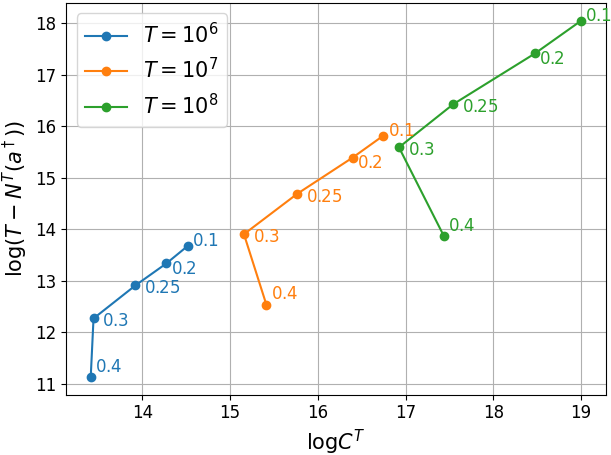}  
  \caption{Number of rounds with $a^t\neq a^\dagger$. The dashed lines are the theoretical upper bound.}
  \label{fig:RPS_TVA_NC}
\end{figure}

\end{appendix}

\end{document}